\newtheorem{proposition}{Proposition}
\newcommand{\qa}{{\bf a}}
\newcommand{\qc}{{\bf c}}
\newcommand{\qe}{{\bf e}}
\newcommand{\qg}{{\bf g}}
\newcommand{\qt}{{\bf t}}
\newcommand{\qv}{{\bf v}}
\newcommand{\qw}{{\bf w}}
\newcommand{\qx}{{\bf x}}
\newcommand{\qy}{{\bf y}}
\newcommand{\qz}{{\bf z}}
\newcommand{\qB}{{\bf B}}
\newcommand{\qF}{{\bf F}}
\newcommand{\qG}{{\bf G}}
\newcommand{\qI}{{\bf I}}
\DeclareMathOperator*{\argmax}{arg\,max}
\newcommand{\SINR}{\mathtt{SINR}}
\newcommand{\ZF}{\mathtt{ZF}}
\newcommand{\PZF}{\mathtt{PZF}}
\newcommand{\MR}{\mathtt{MR}}
\newcommand{\CS}{\mathtt{cs}}
\newcommand{\SI}{\mathtt{SI}}
\newcommand{\Ntx}{N}
\newcommand{\Nrx}{N}
\newcommand{\CL}{\mathtt{cl}}
\newcommand{\Ncl}{N_\CL}
\newcommand{\dl}{\mathtt{J}}
\newcommand{\ul}{\mathtt{O}}
\newcommand{\UT}{\mathtt{UT}}
\newcommand{\UR}{\mathtt{UR}}
\newcommand{\hgmkpd}{\hat{\qg}_{mk'}^{\dl}}
\newcommand{\tgmkd}{\tilde{\qg}_{mk}^{\dl}}
\newcommand{\tgmlu}{\tilde{\qg}_{mk}^{\ul}}
\newcommand{\gmlu}{{\qg}_{m\ell}^{\ul}}
\newcommand{\hGm}{\hat{\qG}_{\Wm}^{\ul}}
\newcommand{\gmkd}{\qg_{mk}^{\dl}}
\newcommand{\hgmkd}{\hat{\qg}_{mk}^{\dl}}
\newcommand{\hgnkd}{\hat{\qg}_{nk}^{\dl}}
\newcommand{\hgmku}{\hat{\qg}_{mk}^{\ul}}
\newcommand{\gmku}{\qg_{mk}^{\ul}}
\newcommand{\hgmlu}{\hat{\qg}_{m\ell}^{\ul}}
\newcommand{\vmku}{{\qv}_{mk}^{\ul}}
\newcommand{\vmkz}{{\qv}_{mk}^{\ZF}}
\newcommand{\vmkpzf}{{\qv}_{mk}^{\PZF}}
\newcommand{\vmkm}{{\qv}_{mk}^{\MR}}
\newcommand{\dm}{\delta_{m}}
\newcommand{\OT}{\mathcal{M}_k}
\newcommand{\SZ}{\mathcal{Z}_k}
\newcommand{\gamdmk}{\gamma_{mk}^{\dl}}
\newcommand{\gamdnk}{\gamma_{nk}^{\dl}}
\newcommand{\gamdmkp}{\gamma_{mk'}^{\dl}}
\newcommand{\gamumk}{\gamma_{mk}^{\ul}}
\newcommand{\gamuml}{\gamma_{m\ell}^{\ul}}
\newcommand{\Sm}{\mathcal{S}_m}
\newcommand{\Wm}{\mathcal{W}_m}
\newcommand{\Zk}{\mathcal{Z}_k}
\newcommand{\Tk}{\mathcal{M}_k}
\newcommand{\betamkd}{\beta_{mk}^{\dl}}
\newcommand{\betakkdu}{\beta_{kk}^{\mathtt{U}}}
\newcommand{\betakldu}{\beta_{\ell k}^{\mathtt{U}}}
\newcommand{\betamlu}{\beta_{m\ell}^{\ul}}
\newcommand{\betamku}{\beta_{mk}^{\ul}}
\newcommand{\alphmk}{\alpha_{mk}}
\DeclareMathOperator{\F}{\mathbf{F}} 
\DeclareMathOperator{\aaa}{\mathbf{a}}
\DeclareMathOperator{\K}{\mathcal{K}}
\DeclareMathOperator{\MM}{\mathcal{M}}
\DeclareMathOperator{\THeta}{\boldsymbol{\theta}}
\DeclareMathOperator{\ALPHA}{\boldsymbol{\alpha}}
\DeclareMathOperator{\diag}{\mathrm{diag}}
\title{\fontsize{0.83cm}{1cm}\selectfont   Cell-Free Massive MIMO Surveillance  of Multiple Untrusted Communication Links}
\author{Zahra Mobini,~\IEEEmembership{Member,~IEEE,} Hien Quoc Ngo,~\IEEEmembership{Senior Member,~IEEE,}\\ Michail Matthaiou,~\IEEEmembership{Fellow,~IEEE}, and Lajos Hanzo,~\IEEEmembership{Life Fellow,~IEEE}}
\begin{document}

\bstctlcite{IEEEexample:BSTcontrol}
\maketitle

\begin{abstract}
A cell-free massive multiple-input multiple-output (CF-mMIMO) system is considered   for enhancing the monitoring performance of wireless  surveillance, where a large number of distributed multi-antenna aided legitimate monitoring nodes (MNs)  proactively monitor multiple distributed untrusted communication links. We consider two types of MNs whose task is to either observe the untrusted transmitters or jam the untrusted receivers.  We  first analyze the performance of CF-mMIMO  surveillance relying on both maximum ratio (MR) and partial zero-forcing (PZF)  combining  schemes and derive closed-form expressions for the monitoring success probability (MSP) of the MNs. We then  propose a joint optimization technique that designs the MN mode assignment, power control, and MN-weighting coefficient control  to enhance the MSP based on the long-term statistical channel state information  knowledge. This challenging problem is effectively transformed into tractable forms and  efficient algorithms are proposed for solving them. Numerical results show that our proposed CF-mMIMO surveillance system  considerably improves  the monitoring performance with respect to a full-duplex co-located massive MIMO proactive monitoring system. More particularly, when the untrusted pairs are distributed over a wide area and use the MR combining, the proposed solution provides nearly a thirty-fold improvement in the minimum MSP over the co-located massive MIMO baseline, and forty-fold improvement, when the PZF combining is employed.

\let\thefootnote\relax\footnotetext{

This work is a contribution by Project REASON, a UK Government funded project under the Future Open Networks Research Challenge (FONRC) sponsored by the Department of Science Innovation and Technology (DSIT). It was also supported by the U.K. Engineering and Physical Sciences Research
Council (EPSRC) (grant No. EP/X04047X/1). The work of Z.~Mobini and  H.~Q.~Ngo
 was supported by the U.K. Research and Innovation Future
Leaders Fellowships under Grant MR/X010635/1, and a research grant from the Department for the Economy Northern Ireland under the US-Ireland R\&D Partnership Programme. The work of M. Matthaiou was supported by the European
Research Council (ERC) under the European Union’s Horizon 2020 research
and innovation programme (grant agreement No. 101001331).
L. Hanzo would like to acknowledge the financial support of the Engineering and Physical Sciences Research Council projects EP/W016605/1, EP/X01228X/1, EP/Y026721/1 and EP/W032635/1 as well as of the European Research Council's Advanced Fellow Grant QuantCom (Grant No. 789028).

Z. Mobini, H. Q. Ngo, and M. Matthaiou  are with the Centre for Wireless Innovation (CWI), Queen's University Belfast, BT3 9DT Belfast, U.K. Email:\{zahra.mobini, hien.ngo, m.matthaiou\}@qub.ac.uk.

L. Hanzo is with the School of Electronics and Computer
Science, University of Southampton, SO17 1BJ Southampton, U.K. (e-mail:
lh@ecs.soton.ac.uk).

Parts of this paper were presented at the 2023 IEEE GLOBECOM conference~\cite{zahra:2023:Globecom}.}
\end{abstract}
\begin{IEEEkeywords}
Cell-free massive multiple-input multiple-output, monitoring node mode assignment, monitoring success probability, power control, proactive monitoring, wireless information surveillance.
\end{IEEEkeywords}

\section{Introduction}
The widespread use of mobile devices along with the explosive popularity of  wireless data services offered by fifth generation (5G) networks has led to the emergence of  the so-called infrastructure-free communication systems, which include  device-to-device (D2D), aerial vehicle (UAV)-aided communications, internet and so on. Although these wireless transmission systems provide an efficient and convenient means for establishing direct connections between mobile terminals, unauthorised or malicious users may misuse these networks to perform illegal activities, commit cyber crime, and jeopardize public safety. As a remedy, legitimate monitoring has attracted considerable attention in recent years~\cite{Zhang:MAg:2018}.

In contrast to wireless physical-layer security (PLS), which aims for making the transmitted information indecipherable to illegitimate monitors~\cite{ZAHRA:TIFS:2019},  this line of PLS  puts emphasis on legally monitoring  the communications of an untrusted pair. Wireless surveillance  is typically classified into three main paradigms based on the kind of strategies used in the surveillance process by a legitimate monitor: $1)$ passive monitoring~\cite{Zhang:TWC:2017}, $2)$ proactive monitoring~\cite{Zhang:TWC:2017}, and $3)$ spoofing relaying~\cite{Jiang:LSP:2017}. In passive monitoring, the legitimate monitor silently observes an untrusted link and, hence, successful monitoring  can only be achieved for the scenarios where  the strength of the monitoring link is better than that of the untrusted one. 
In proactive monitoring, the legitimate monitor operates in a full-duplex (FD) mode, simultaneously observing the untrusted link and sending a jamming signal to interfere with the reception of the untrusted receiver (UR), thereby  degrading the rate of the untrusted link. This improves the    monitoring success probability (MSP)~\cite{Zhong:TWC:2017}, which is one of the fundamental monitoring performance objectives. Finally, in the context of spoofing relaying, the FD monitor observes the untrusted link  and alters the channel information sent over the untrusted link to adjust its rate requirement.

\begin{table*}
\caption{Explicitly Contrasting Our Contributions to the Literature}
\vspace{-0.5em}
\centering
\begin{tabular}{||c||c|c|c|c|c|c|c|c|c||}
	\hline
	Feature &  \cite{Zhang:TWC:2023} \  &  ~\cite{Xu:TWC:2022} & \cite{Li:LSP:2018} & \cite{Haiyang:TWC:2020} &  \cite{Xu:TVT:2020} & \cite{Baogang:TVT:2019} & \cite{Moon:TWC:2019} & \cite{Li:TIFS:2021}& \textbf{our work}\\
	\hline\hline
	Multiple untrusted pairs &   &  \checkmark &\checkmark  & \checkmark$^a$ & \checkmark  &\checkmark  &  & &\checkmark\\
	\hline
	Multiple monitors &   &   &  &  &   & \checkmark$^b$ & \checkmark &\checkmark$^b$ &\checkmark\\
	\hline
	Multiple-antenna technique & \checkmark  &   &  &\checkmark   &   &  &  & &\checkmark\\
	\hline
	Statistical CSI & $\checkmark$  &   &  &  &   &  &  & &\checkmark\\
	\hline
	 Coherent monitoring&   &   &  &  &   &  &  & &\checkmark\\
 	\hline
	Distributed  &   &   &  &  &   &  &  & &\checkmark\\
 	\hline
  \end{tabular}
  \begin{tablenotes}
   \item \hspace{9em}$^a$ Two untrusted pairs, $^b$ two monitors. 
\end{tablenotes}
\vspace{0.3em}
\label{t1}
\end{table*}
\setlength{\textfloatsep}{0.4cm}

Since the publication of the pioneering paper~\cite{Zhang:TWC:2017}, proactive monitoring has been widely studied  under diverse
untrusted communication scenarios, such as  multiple-input multiple-output (MIMO) systems~\cite{Zhong:TWC:2017,Feizi:TCOM:2020,Huang:LSP:2018,Zhang:TWC:2023}, relaying systems~\cite{Jiang:LSP:2017,Dingkun:LSP:2017, Moon:2018:TWC,Li:2018:GLOBECOM}, UAV networks~\cite{Mobini:ICC:2018,Kai:TVT:2019},  cognitive radio networks~\cite{Xu:TVT:2021,Ge:ITJ:2022}, and intelligent reflecting surface (IRS)-aided surveillance systems~\cite{Guojie:TVT:2022,Ming:TWC:2023}. Specifically, in~\cite{Zhong:TWC:2017,Feizi:TCOM:2020}, multi-antenna techniques were utilized to improve the monitoring performance. In~\cite{Zhong:TWC:2017}, an optimization framework  was developed for the jamming power control and transmit/receive beamforming vectors at the legitimate monitor  by maximizing the MSP. The authors of~\cite{Feizi:TCOM:2020} established an optimization framework for joint precoding design and jamming power control, by taking into account the impact  of jamming on the performance of  other  legitimate users. Low-complexity suboptimal zero-forcing (ZF)-based beamforming schemes were also proposed in~\cite{Feizi:TCOM:2020}. 
Under the more practical assumption of imperfect channel state information (CSI), the maximization of the worst-case MSP attained by multi-antenna aided proactive monitoring systems was investigated in~\cite{Huang:LSP:2018}, where the CSI error was deterministically bounded. Additionally, by assuming the knowledge of the imperfect instantaneous CSI of the observing link and CSI statistics of the
jamming  and the untrusted links,   Zhang  \emph{et al.}~\cite{Zhang:TWC:2023} studied the performance of multi-antenna assisted proactive monitoring  in uplink systems and derived semi-closed-form expressions for both the MSP and monitoring rate.
Proactive monitoring was investigated in~\cite{Jiang:LSP:2017} in dual-hop decode-and-forward relaying systems, where the legitimate monitor can adaptively act  as a monitor, a jammer or a helper, while proactive monitoring  was studied in~\cite{Dingkun:LSP:2017} via jamming designed for amplify-and-forward relay networks. Moon \emph{et al.}~\cite{Moon:2018:TWC} extended the results of~\cite{Dingkun:LSP:2017} to  multi-antenna aided multi-relay systems.
Later, UAV-aided information surveillance was proposed in~\cite{Mobini:ICC:2018}, where an FD ground monitor observes the untrusted link  and simultaneously  sends the collected untrusted information to the UAV. By contrast, Li \emph{et al.}~\cite{Kai:TVT:2019}  relied on a legitimate UAV  to track  untrusted UAV-to-UAV communications and developed both an energy-efficient jamming strategy and  a tracking algorithm.
The proactive monitoring concept of
cognitive systems was introduced  in~\cite{Xu:TVT:2021} and~\cite{Ge:ITJ:2022}, where the secondary users are allocated to share the spectrum of the untrusted users, provided that they are willing to  act as an observer or friendly jammer monitoring the untrusted link.
Recently, IRSs have also found their way into information surveillance systems, where the IRS is used for degrading the untrusted channel's rate~\cite{Guojie:TVT:2022} and for improving  the observing channel~\cite{Ming:TWC:2023} to further enhance the monitoring performance. Finally, beneficial IRS
deployment strategies and joint beamforming design problems were proposed in~\cite{Ming:TWC:2023}.

\subsection{Knowledge Gap and Motivations}
It is important to point out that most  studies tend to investigate simple setups concerning the untrusted communication links and/or observing links. More specifically, a popular assumption in the aforementioned literature is that there is a single untrusted link.  This  assumption  is optimistic, because realistic systems are likely to have more than one untrusted communication links in practice. In this context, Xu and  Zhu~\cite{Xu:TWC:2022}  have studied proactive monitoring  using a single monitor for observing multiple untrusted pairs in  scenarios associated with either average rate or with outage probability constrained untrusted links. Li \emph{et al.} \cite{Li:LSP:2018} used   proactive monitoring with relaying features to increase the signal-to-interference-plus-noise ratio   (SINR) of multiple untrusted links,  which results in a higher rate for the untrusted links, and hence, higher observation rate.  Moreover, Zhang \emph{et al.} in~\cite{Haiyang:TWC:2020} characterized the achievable monitoring rate
region of  a single-monitor surveillance system observing two untrusted pairs operating within the same spectral band and using a minimum-mean-squared-error  successive interference cancellation (MMSE-SIC) receiver. Proactive monitoring was studied in~\cite{Xu:TVT:2020} for the downlink of an untrusted non-orthogonal multiple access (NOMA) network with one untrusted transmitter (UT) and multiple groups of URs, while relying on a single-antenna monitor equipped with a SIC receiver.

In the case of a distributed deployment of untrusted pairs over a geographically wide area, it is impractical to cater for the direct monitoring of each and every untrusted pair by relying on a single monitor. Hence, attaining a given target MSP performance for the untrusted pairs is a fundamental challenge.
Therefore,  cooperative operation relying on a single primary FD monitor and an auxiliary assistant FD monitor supervising a single UT and multiple URs  was proposed in~\cite{Baogang:TVT:2019} for maximizing the monitoring energy efficiency via optimizing the jamming power and the cooperation strategy selected from a set of four specific  strategies. Later, Moon \emph{et al.}~\cite{Moon:TWC:2019} looked into  proactive monitoring relying on a group of single-antenna aided intermediate relay nodes harnessed  for supporting a  legitimate monitor, which acts either as a jammer or as an observer node.   Furthermore, the authors of~\cite{Li:TIFS:2021} harnessed  a pair of single-antenna half-duplex nodes that take turns in performing observing and jamming. 
However, the significant drawback of these studies is that they only focused on either the single-untrusted-link scenario~\cite{ Moon:TWC:2019,Li:TIFS:2021} or on a specific system setup~\cite{Baogang:TVT:2019}. Another main concern  is the  overly optimistic assumption of knowing instantaneous  CSI of all links  at the  monitor nodes. In this case, the system level designs must be re-calculated on the small-scale fading time scale, which fluctuates quickly in both time and frequency.
Therefore, the study of how to efficiently carry out surveillance operation using {multiple monitors} in the presence of {multiple untrusted pairs} is extremely timely and important, yet, this is still an open problem at the time of writing.

To address  the need for reliable information surveillance in complex  practical scenarios, we are inspired by the emerging technique of cell-free massive MIMO (CF-mMIMO)~\cite{Hien:cellfree} to  propose a new proactive monitoring system, termed as \emph{CF-mMIMO surveillance}.
CF-mMIMO constitutes an upscaled version of user-centric network MIMO.
In contrast to traditional cellular
systems, \emph{i)} fixed cells and cell boundaries disappear in CF-mMIMO and \emph{ii)} the users
are served coherently by all serving antennas within the same time-frequency resources~\cite{Matthaiou:COMMag:2021}. Therefore, CF-mMIMO offers 
significantly higher degrees of freedom in managing interference, hence resulting in substantial performance improvements for all the users over conventional cellular networks.
The beneficial  features of CF-mMIMO are
its substantial macro diversity, favorable propagation, and ubiquitous coverage for all users in addition to excellent geographical load-balancing. Owing to these eminent advantages, CF-mMIMO has sparked considerable research interest in recent years and has yielded huge performance gains in
terms of spectral efficiency (SE)~\cite{Hien:cellfree, Bashar:TWC:2019}, energy efficiency~\cite{Hien:TGCN:2018,Zhang:IOT:2021},  and security~\cite{Tiep:TCOM:2018,Chen:IOT:2023}. 
More interestingly, recent research has shown that  utilizing
efficient power allocation and  receive combining/transmit precoding designs in CF-mMIMO, relying on multiple-antenna access points (APs) further enhances the system performance~\cite{Giovanni:TWC:2020, Du:TCOM:2021}.

 Our CF-mMIMO information
surveillance system is comprised of a large number of spatially
distributed legitimate multiple-antenna monitoring nodes (MNs), which jointly and
coherently perform surveillance of multiple untrusted pairs
distributed over a wide geographic  area. In our system,
there are typically several MNs in each other's close proximity
for any given untrusted pair. Thus, high macro-diversity gain
and low path loss can be achieved, enhancing
the observing channel rates and degrading the performance of untrusted links.
Therefore, CF-mMIMO surveillance is expected to offer an
improved and uniform monitoring performance for all the
untrusted pairs compared to its single-monitor (co-located)
massive MIMO based counterpart. In addition, the favorable propagation characteristics of CF-mMIMO systems allow our CF-mMIMO surveillance system to employ simple processing
techniques, such as linear precoding and combining, while
still delivering excellent monitoring performance\footnote{In general, network-wide signal processing
maximizes the system performance but it entails complex signal co-processing procedures, accompanied by substantial deployment costs. Hence, it is unscalable as the number of service antennas and/or users grows unboundedly. On the other hand, distributed
processing is of low-complexity and more scalable, but its performance is often far from the optimal one. However, for CF-mMIMO systems, as a benefit of the distributed network topology and massive MIMO properties, distributed signal processing can strike an excellent trade-off between the system performance and scalability.}. More importantly, when the CF-mMIMO concept becomes integrated
into our wireless surveillance system, a virtual FD mode can
be emulated, despite relying on half-duplex MNs. Relying on
half-duplex MNs rather than FD MNs, makes the monitoring
system more cost-effective and less sensitive to residual self interference. More particularly, two types of MNs
are considered: 1) a specifically selected subset of the MNs
is purely used for observing the UTs; 2) the rest of the MNs
cooperatively jam the URs. In addition, since the MNs are now
distributed across a large area, the inter-MN interference encountered is significantly reduced compared to a conventional
FD monitoring/jamming system. Moreover,  harnessing the
channel hardening attributes of CF-mMIMO systems enables
us to dynamically adjust the observing vs. jamming mode,
the MN transmit power, and the MN-weighting coefficients for
maximizing the overall monitoring performance based on only
long-term CSI. Table I boldly and explicitly contrasts our
contributions and benchmarks them against the state-of-the art. We further elaborate on the novel contributions of this
work in the next subsection in a point-wise fashion.

\subsection{Key Contributions}
The main technical contributions and key novelty of this paper are summarized as follows:
\begin{itemize}
\item We propose a novel wireless surveillance system, which is based on the CF-mMIMO concept relying on  either observing or jamming  mode assignment. In particular, by assuming realistic imperfect CSI knowledge,  we derive exact closed-form expressions for the MSP of CF-mMIMO surveillance system with multiple-antenna MNs over multiple untrusted pairs for distributed maximum ratio (MR) and partial ZF (PZF)  combining  schemes. Additionally, we show that when the number of MNs in observing mode tends to infinity, the effects of inter-untrusted user interference, inter-MN interference, and noise gradually disappear. Furthermore, when the number $M_\dl$ of MNs in the jamming mode goes to infinity, we can reduce the transmit power of each MN by a factor of $\frac{1}{M_\dl}$  while maintaining the given SINR.
    
    \item  We formulate a joint optimization problem for the MN mode
assignment, power control, and MN-weighting coefficient control  for  maximizing the minimum MSP  of all the untrusted pairs subject to a  per-MN average transmit power constraint.  We  solve the  minimum MSP maximization problem by casting the original problem into three sub-problems, which are solved using an iterative algorithm.

  \item We also  propose a  greedy  UT grouping algorithm for our CF-mMIMO system relying on PZF combining scheme.  Our numerical results show that the proposed  joint optimization approach significantly outperforms the random mode assignment, equal power allocation,  and equal MN-weighting  coefficient based approaches. 
 The simulation results also confirm that, compared to the co-located massive MIMO aided proactive monitoring system relying on FD operation, where  all MNs are co-located as an antenna array and simultaneously perform observation and jamming, our CF-mMIMO surveillance system brings the MNs geographically closer to the untrusted pairs. Thus leads to a uniformly good monitoring performance for all untrusted pairs\footnote{In terms of data-sharing overhead and fronthaul, co-located massive MIMO based surveillance systems require lower fronthaul capacity compared to CF-mMIMO surveillance. Nevertheless, in our CF-mMIMO surveillance system we consider local processing,  which strikes an excellent balance between the computational complexity, fronthaul limitations, and monitoring performance.}.
\end{itemize}

\textit{Notation:} We use bold upper case letters to denote matrices, and lower case letters to denote vectors.  The superscripts $(\cdot)^*$, $(\cdot)^T$, and $(\cdot)^\dag$ stand for the conjugate, transpose, and conjugate-transpose (Hermitian), respectively. A zero-mean circular symmetric complex Gaussian distribution having a variance of $\sigma^2$ is denoted by $\mathcal{CN}(0,\sigma^2)$, while $\mathbf{I}_\Ntx$ denotes the $\Ntx \times  \Ntx$ identity matrix.  Finally, $\mathbb{E}\{\cdot\}$ denotes the statistical expectation.


\begin{figure}[t]
	\centering
  \vspace{-10em}
	\includegraphics[width=0.5\textwidth]{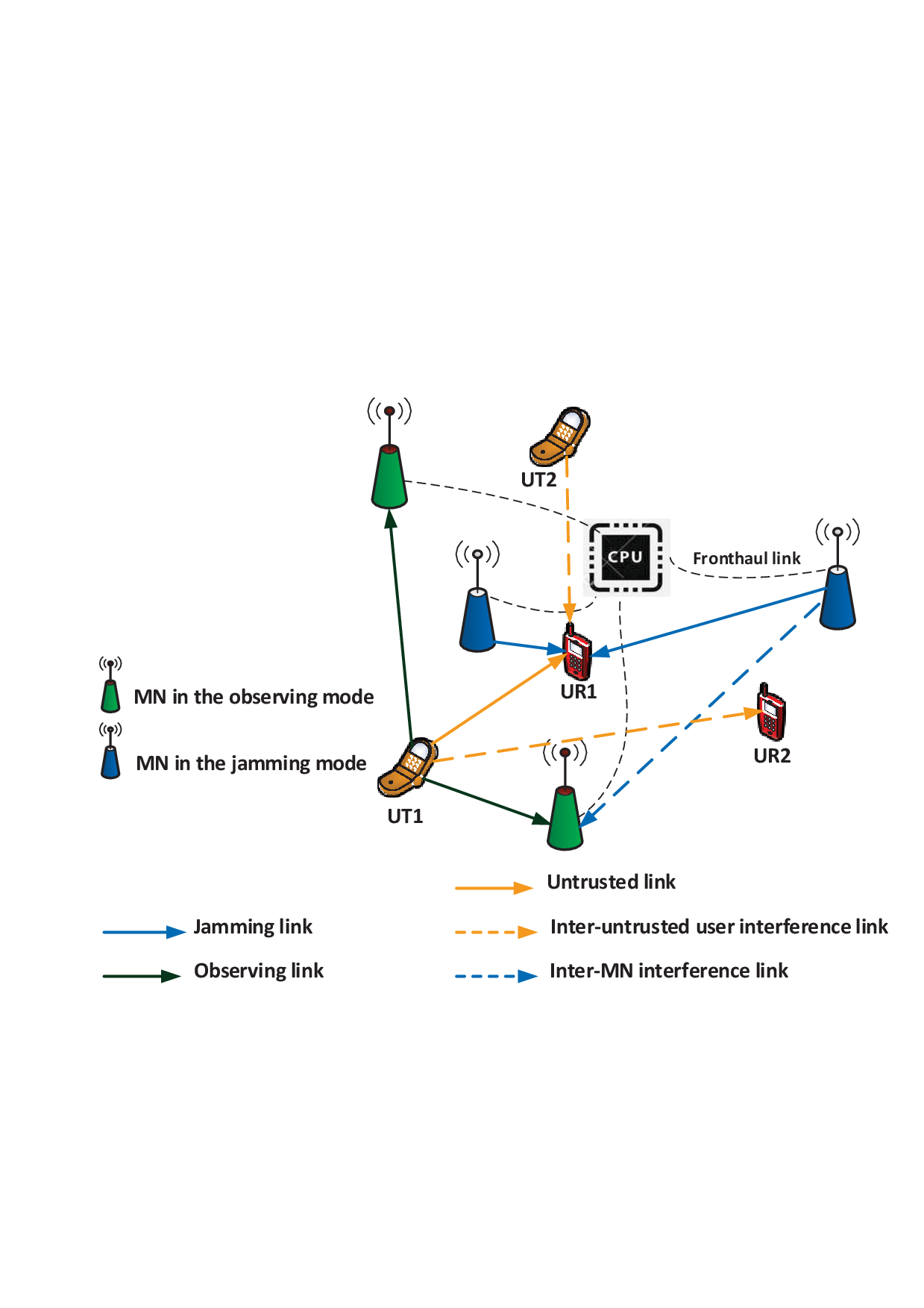}
	\vspace{-10em}
	\caption{CF-mMIMO surveillance system model with the assigned MNs in observing mode and jamming mode  along with the received desired and interference signals at a typical untrusted pair (UT $1$ and UR $1$).} 
	\vspace{-.4em}
	\label{fig:Fig0}
\end{figure}
  \vspace{-1em}
\section{System Model}~\label{sec:Sysmodel}
In this section, we introduce the CF-mMIMO surveillance system model for two different combining schemes. As shown in Fig.~\ref{fig:Fig0}, we consider a surveillance scenario, where $M$ MNs are employed to monitor  $K$ untrusted communication pairs. Let us denote the sets of MNs and untrusted communication pairs by $\MM \triangleq \{1, \dots, M\}$ and $\K\triangleq \{1,\dots,K\}$, respectively.
Each  UT and UR is equipped with a single antenna, while each MN is equipped with $\Ntx$  antennas.  All MNs, UTs,  and URs are half-duplex devices.  We assume that all MNs are connected to the central processing unit (CPU) via  fronthaul links.  The MNs can switch between observing mode, where they receive untrusted messages, and jamming mode, where they send jamming signals to the URs. The assignment of each mode to its corresponding MN is designed to maximize the minimum MSP over all the untrusted links, as it will be discussed in Section~\ref{sec:SE}. We use the binary variable $a_{m}$ to show the mode assignment for each MN $m$,  so that 
\vspace{-0.5em}
\begin{align}
\label{a}
a_{m} \triangleq
\begin{cases}
  1, & \text{if MN $m$ operates in the jamming mode,}\\
  0, & \mbox{if MN $m$ operates in the observing mode}.
\end{cases} 
\end{align}
Note that we consider block fading channels, where the  fading envelope of each link stays constant during the transmission of a block of symbols and changes to an independent value in the next block.
The jamming channel (observing channel) vector between the $m$-th MN and the $k$-th UR ($k$-th UT) is denoted by  $\gmkd\in\mathbb{C}^{\Ntx \times 1}$  ($\gmku\in\mathbb{C}^{\Nrx \times 1}$), $\forall k  \in  \K, m \in \MM$, respectively.
It is modelled as $\gmkd=\sqrt{\betamkd}\tgmkd,~(\gmku=\sqrt{\betamku}\tgmlu)$, where $\betamkd$ ($\betamku$) is the large-scale fading coefficient and $\tgmkd\in\mathbb{C}^{\Ntx \times 1}$ ($\tgmlu\in\mathbb{C}^{\Ntx \times 1}$) is the small-scale fading vector containing independent and identically distributed (i.i.d.) $\mathcal{CN} (0, 1)$ random variables (RVs). Furthermore, the channel gain between the $\ell$-th UT  and the $k$-th UR  is  $h_{\ell k}=(\betakldu)^{1/2}\breve{h}_{\ell k}$, where $\betakldu$ is the large-scale fading coefficient and $\breve{h}_{\ell k}$ represents small-scale fading, distributed as $\mathcal{CN}(0,1)$.
We note that $h_{kk}$ models the channel coefficient of the $k$-th
untrusted link spanning from the $k$-th UT to the $k$-th UR, $ \forall k \in \K$.
Finally, the channel matrix between MN $m$ and MN $i$, $\forall m,i\in\MM$, is denoted by $\qF_{mi}\in \mathbb{C}^{\Nrx\times\Ntx}$ where its elements, for $i\neq m$, are  i.i.d. $\mathcal{CN}(0,\beta_{mi})$ RVs and $\F_{mm} = \bold{0}, \forall m$.
Note that the channels $\gmkd$ and $\gmku$
may be estimated at the legitimate MN by overhearing the
pilot signals sent by UT $k$ and UR $k$, respectively~\cite{Moon:2018:TWC}. By following~\cite{Hien:cellfree}, for the minimum-mean-square-error (MMSE) estimation technique and the assumption of orthogonal pilot sequences, the   estimates of $\gmkd$  and $\gmku$ can be written as   $\hgmkd \sim \mathcal{CN}(\bold{0},\gamdmk \mathbf{I}_\Ntx)$ and $\hgmku \sim \mathcal{CN}(\bold{0},\gamumk \mathbf{I}_\Ntx)$, respectively, 
where $\gamdmk=\frac{\tau_\mathrm{t}\rho_\mathrm{t}(\betamkd)^2}{\tau_\mathrm{t}\rho_\mathrm{t}\betamkd+1}$ and $\gamumk=\frac{\tau_\mathrm{t}\rho_\mathrm{t}(\betamku)^2}{\tau_\mathrm{t}\rho_\mathrm{t}\betamku+1}$
with $\rho_\mathrm{t}$ and $\tau_\mathrm{t} \geq 2K$ being the normalized transmit power of each pilot symbol and the length  of pilot sequences, respectively.
Since it is difficult (if not impossible) for the legitimate MNs to obtain the CSI of untrusted links, we assume that  $h_{k\ell}$ is unknown to the  MNs.

All the UTs simultaneously send independent
untrusted messages to their corresponding URs over the
same frequency band. The signal  transmitted from  UT $k$ is denoted by $x_{k}^\ul  = \sqrt{\rho_\UT} s_{k}^{\ul}$, 
where $s_{k}^\ul$, with $\mathbb{E}\left\{|s_{k}^\ul|^2\right\}=1$, and $\rho_\UT$ represent  the transmitted symbol and  the  normalized transmit power at each UT, respectively. At the same time, the MNs in jamming mode intentionally send jamming signals to interrupt the communication links between untrusted pairs. This enforces the reduction of the achievable data rate at the URs, thereby enhancing the MSP. More specifically, the MNs operating in jamming mode use the MR transmission technique, also known as conjugate beamforming, in order  to jam the reception of the URs. Note that MR is considered because it maximizes the strength of the jamming signals at the URs.
Let us denote the jamming symbol intended for the untrusted link $k$  by $s_k^\dl$, which is a RV with zero mean and unit variance. When using MR precoding, the $\Ntx\times 1$  signal vector transmitted by MN $m$  can be expressed as
\vspace{-0.3em}
\begin{align}
 \qx_{m}^{\dl}
= a_m\sqrt{\rho_\dl}\sum_{k \in \mathcal{K}} \sqrt{\theta_{mk}} \left(\hgmkd\right)^*
s_{k}^{\dl},   
\end{align}
where $\rho_\dl$ is the maximum normalized transmit power at each MN in the jamming mode. Moreover, $ \theta_{mk}$ denotes the  power allocation coefficient chosen to satisfy the practical power constraint $\mathbb{E}\left\{\|\qx_{m}^{\dl}\|^2\right\} \leq \rho_\dl$ at each MN in jamming mode, which can be further expressed as
\vspace{-0.2em}
\begin{align}
\label{DL:power:cons}
a_m\sum_{k\in\mathcal{K}} \gamdmk \theta_{mk} \leq \frac{1}{\Ntx}, \forall m.
\end{align}
Accordingly, the signal received  by UR $k$ can be written as
\vspace{0.2em}
\begin{align}~\label{eq:ykdl}
&y_k^{\UR}
= h_{kk}x_{k}^{\ul}+ \sum_{\ell\in \mathcal{K}, \ell \neq k}h_{\ell k}x_{\ell}^{\ul}  
\nonumber\\
&+\sqrt{\rho_\dl}\!\!\sum_{m \in \mathcal{M}}
\sum_{k'\in\mathcal{K}}\!\! a_m\sqrt{\theta_{mk'}}
\left(\gmkd\right)^T\left(\hgmkpd\right)^*
s_{k'}^{\dl}+w_{k}^{\UR},
\end{align}
where $w_{k}^{\UR}\sim\mathcal{CN}(0,1)$ is the additive white Gaussian noise (AWGN) at UR $k$. It is notable that the second term in~\eqref{eq:ykdl} represents the interference caused by other UTs due to their concurrent transmissions over the same frequency band and the third term quantities the interference emanating from the MNs in the jamming mode.



The MNs in the observing mode, i.e., MNs  with $a_m=0, \forall m$, receive the transmit signals from all UTs. The received signal $\qy_{m}^{\ul}\in\mathbb{C}^{\Nrx \times 1}$ at MN $m$ in the observing mode is expressed as
\vspace{0.5em}
\vspace{-0.5em}
\begin{align}\label{eq:ymul}
\qy_{m}^{\ul}
=&
\sqrt{\rho_\UT}\sum_{k\in \mathcal{K}}(1-a_m)\qg_{mk}^{\ul} s_{k}^{\ul}
+\sqrt{\rho_\dl}
\sum_{i\in\mathcal{M}}\sum_{\ell\in \mathcal{K}}a_i\times\nonumber
\\
&
~~
(1-a_m) \sqrt{\theta_{i\ell}}
\qF_{mi}
(\hat{\qg}_{i\ell}^\dl)^*s_\ell^\dl
+(1-a_m)\qw_{m}^{\ul},
\end{align}
where $\qw_{m}^{\ul}$ is the $\mathcal{CN}(\bold{0}, \mathbf{I}_\Ntx)$ AWGN vector. We note  from~\eqref{eq:ymul}  that if MN $m$ does not operate in the observing mode, i.e., $a_m=1$, it does not receive any signal, i.e., $\qy_{m}^{\ul}=\boldsymbol{0}$. 
Then, MN $m$ in the observing mode performs linear combining by partially equalizing the received signal in~\eqref{eq:ymul} using the combining vector $\vmku$    as $(\vmku)^\dag\qy_{m}^{\ul}$.
The resultant signal is then forwarded to the CPU for detecting the untrusted signals, where the receiver combiner sums
up the equalized weighted signals. 
In particular, to enhance the observing capability, we assume that the forwarded signal is further multiplied by the MN-weighting  coefficient $\alphmk,  0 \leq\alphmk\leq 1,~\forall k, m$. The aggregated received signal for UT $k, \forall k,$ at the CPU is
\begin{align}\label{eq:rul}
	r_{k}^{\ul}&=\sum_{m=1}^{M} \alphmk (\vmku)^\dag\qy_{m}^{\ul} \nonumber
 \\
&=  {\sqrt{\rho _\UT}\sum\limits_{m \in {\mathcal{M}}} \alphmk(1-a_m)  {(\vmku)^\dag}\gmku} s_k^\ul
+\nonumber
\\
&\sum\limits_{\ell \in {\mathcal{K}}\backslash k } {\sqrt{\rho _\UT}\sum\limits_{m \in {\mathcal{M}}}\alphmk (1-a_m) {(\vmku)^\dag}\gmlu }s_\ell^\ul
+\nonumber\\ 
& \sum\limits_{\ell \in {{\mathcal{K}}}}\sqrt{\rho _\dl}\!\! \sum\limits_{m \in {\mathcal{M}}} \!\sum\limits_{i \in {\mathcal{M}}}\!\!\alphmk(1-a_m)a_i\sqrt{\theta_{i\ell}}  
 {(\vmku)^\dag}\times
 \nonumber
 \\
 &{\qF_{mi}}{({\mathbf{\hat g}}_{i\ell}^\dl)^{\ast}}s_\ell^\dl
 \!+\sum\limits_{m \in {\mathcal{M}}}\!\! \alphmk(1-a_m) {(\vmku)^\dag} \qw_{m}^{\ul}.
\end{align}
Finally, the observed information $s_{k}^{\ul}$ can be detected from $r_{k}^{\ul}$.
\subsection{Combining Schemes}\label{sec:combining_scheme}
For CF-mMIMO surveillance systems, when multiple untrusted pairs are spatially multiplexed, the  linear receive combiner may harness the   MMSE objective function (OF), albeit  other OFs may also be harnessed. However,  MMSE optimization relies on centralized processing  and has high computational complexity as well as signaling
load. As a potential remedy,  low-complexity interference-agnostic combining schemes, such as MR, perform well, provided that each MN is equipped with a large number of antennas. But, the MR combiner
does not perform well for two scenarios: 1) when no favorable propagation can be guaranteed between the untrusted users, namely when the MNs are only equipped with a few antennas; and 2) in  interference-limited regimes, since MR is incapable of eliminating the inter-untrusted-user interference. In these cases, partial ZF-based  combining  outperforms MR due to its ability to deal with interference, while still being scalable.  Therefore, in this paper, we consider both the partial ZF and MR combining schemes, which can be implemented in a distributed manner and do not require any instantaneous CSI exchange between the MNs and the CPU.

 \emph{1) Maximum Ratio Combining}:
The simplest linear combining solution is the MR combining (i.e., matched filter) associated with 
\begin{align}\label{eq:MR_PREC}
\vmku=\vmkm=\hgmku,
\end{align}
which has low computational complexity. MR combining  maximizes
the power of the desired observed signal, while retaining the system's scalability. In this case, we have
\begin{align}\label{eq:rulMR}
	r_{k}^{\MR}&=\sum_{m=1}^{M} \alphmk (\hgmku)^\dag\qy_{m}^{\ul} \nonumber
 \\
&=  {\text{DS}_k^\MR} s_k^\ul+\sum\limits_{\ell \in {\mathcal{K}}\backslash k }{{\text{UI}}_{\ell k} ^\MR}s_\ell^\ul+\sum\limits_{\ell \in {{\mathcal{K}}}}{\text{MI}_{\ell k} ^{\MR}}s_\ell^\dl+{{\text{AN}}_k ^\MR}
\end{align}
where
\begin{align}
&{\text{DS}_k^\MR} =  {\sqrt{\rho _\UT}\sum\limits_{m \in {\mathcal{M}}} \alphmk(1-a_m)  {(\hgmku)^\dag}\gmku},\nonumber
\\
&{{\text{UI}}_{\ell k} ^\MR}= \sqrt{\rho _\UT}\sum\limits_{m \in {\mathcal{M}}}\alphmk (1-a_m) {(\hgmku)^\dag}\gmlu,
\nonumber
\\ 
&\text{MI}_{\ell k} ^{\MR}= \sqrt{\rho _\dl}\!\!\sum\limits_{m \in {\mathcal{M}}} \sum\limits_{i \in {\mathcal{M}}}\!\!\alphmk(1-a_m)a_i\sqrt{\theta_{i\ell}}  
 {(\hgmku)^\dag}{\qF_{mi}}{({\mathbf{\hat g}}_{i\ell}^\dl)^{\ast}},\nonumber
 \\
&{\text{AN}}_k ^\MR= \sum\limits_{m \in {\mathcal{M}}} \!\!\alphmk(1-a_m) {(\hgmku)^\dag}\qw_{m}^{\ul}, 
\end{align}
where  $\text{DS}_k ^\MR,  \text{UI}_{\ell k} ^\MR$, and $\text{MI}_{\ell k} ^\MR$,    represent the desired
signal, cross-link interference caused by  the transmission of $\ell$-th UT, and
inter-MN interference,  respectively. Furthermore,  $\text{AN}_k ^\MR$ represents the additive noise.

\emph{1) Partial Zero-Forcing Combining}:
The MR combining does not perform well at high signal-to-noise ratios (SNRs), since it  is incapable of eliminating the inter-untrusted user interference. For this reason, we now consider  the PZF combining scheme, which has the ability to mitigate interference in a  distributed and scalable manner while   attaining a flexible trade-off between the interference mitigation and  array gain~\cite{Jiayi:TWC:2021}. 
Therefore, each MN $m$ in observing mode virtually divides the UTs into two groups: $\Sm \subset \{1, 
\ldots, K\}$, which includes the index of strong UTs, and $\Wm \subset \{1, \ldots, K\}$, which hosts the index of weak UTs, respectively.  The  UT grouping can be based on diverse criteria, including the value of large-scale fading coefficient $\betamku$. Our proposed UT grouping strategy will be discussed in  section~\ref{Sec:numerical}. Here, our prime focus is on providing uniformly good monitoring performance over all untrusted pairs and hence MN $m$ employs   ZF combining for the UTs in $\Wm$ and MR combining for the UTs in $\Sm$.  In this case, the intra-group interference between UTs $\in
\Wm$ is actively cancelled, while the inter-group interference
between UTs $\in \Wm$ and UTs $\in \Sm$ is tolerated. 
We note that the number of  antennas at each MN  must meet
the requirement $\Ntx \geq |\Wm|+1$.
The local combining vector constructed by MN $m$ for UT $k \in \Wm$ is given by
 \begin{align} \label{eq:ZF_prec}
     \vmku=\vmkz=\gamumk{\hGm\big[(\hGm)^\dag\hGm\big]^{-1}\qe_{k}},
 \end{align}
where $\hGm$ is an $N \times |\Wm|$ collective channel estimation matrix from all the UTs in $\Wm$ to MN $m$ as $\hGm=[\hat{\qg}_{mk}^\ul: k \in \Wm]$ and $\qe_{k}$ is the $k$-th column of $\qI_K$. 
Hence, for any  pair of UTs $k$ and  $\ell \in \Wm$ we have
\begin{align}\label{eq:ZF_prec2}
(\vmkz)^\dag\hgmlu = \begin{cases} \displaystyle \gamumk & \text {if } k=\ell,\\ \displaystyle 0 & \text {otherwise}.
\end{cases}
\end{align}
Moreover,   the MR combining vector constructed by MN $m$ for UT $k \in \Sm$ is given in~\eqref{eq:MR_PREC}.
Therefore,  by applying ZF combining for UTs $\in \Wm$ and MR combining
for UTs $\in \Sm$, \eqref{eq:rul} can be rewritten as
\begin{align}\label{eq:rul_pzf}
&r_{k}^{\PZF}=\sum_{m \in M} \alphmk (\vmkpzf)^\dag\qy_{m}^{\ul} \nonumber
 \\
&= \text{DS}_k ^\PZF  s_k^\ul
+\sum\limits_{\ell \in {\mathcal{K}}\backslash k }{{\text{UI}}_{\ell k} ^\PZF}s_\ell^\ul
+ \sum\limits_{\ell \in {{\mathcal{K}}}} \text{MI}_{\ell k 
 }^{\PZF}s_\ell^\dl
+ {{\text{AN}}_k ^\PZF},
\end{align}
where
\begin{align}
\text{DS}_k ^\PZF=&  \sqrt{\rho _\UT}\Big(\sum\limits_{m \in \Zk} \alphmk(1-a_m)  {(\vmkz)^\dag}\gmku+\nonumber\\
&\quad\sum\limits_{m \in \Tk} \alphmk(1-a_m)  {(\vmkm)^\dag}\gmku \Big)  
\nonumber\\
{{\text{UI}}_{\ell k} ^\PZF}=&\sqrt{\rho _\UT} \Big(\sum\limits_{m \in \Zk}\alphmk (1-a_m) {(\vmkz)^\dag}\gmlu + \nonumber\\
&\quad\sum\limits_{m \in \Tk}\alphmk (1-a_m) {(\vmkm)^\dag}\gmlu  \Big)
\nonumber\\ 
{\text{MI}_{\ell k 
 }^{\PZF}} \! \!=& \!\sqrt{\rho _\dl}\!\! \sum\limits_{i \in {\mathcal{M}}}\!\!  \!a_i\sqrt{\theta_{i\ell}}  \Big(\!\!\!\sum\limits_{m \in \Zk}\!\!  \alphmk(1\!-\!a_m) 
 {(\vmkz)^\dag}{\qF_{mi}}{({\mathbf{\hat g}}_{i\ell}^\dl)^{\ast}}+\nonumber\\
 &\quad\sum\limits_{m \in \Tk}  \alphmk(1\!-\!a_m) 
 {(\vmkm)^\dag}{\qF_{mi}}{({\mathbf{\hat g}}_{i\ell}^\dl)^{\ast}}\!\Big)
\nonumber
\\
{{\text{AN}}_k ^\PZF}=&  \sum\limits_{m \in \Zk}\alphmk(1-a_m) {(\vmkz)^\dag  \qw_{m}^{\ul}}+\nonumber\\
&\quad\sum\limits_{m \in \Tk}\alphmk(1-a_m) {(\vmkm)^\dag  \qw_{m}^{\ul}},
\end{align}
where $\Zk$ and $\Tk$ denote the set of indices of MNs that assign the $k$-th UT into $\Wm$ for  ZF combining and the set of indices of MNs that assign $k$-th UT into $\Sm$ for MR combining, respectively, as
$\Zk \triangleq \{m: k \in \Wm, m=1, \ldots, M\} $ and
$\Tk \triangleq \{m: k \in \Sm, m=1, \ldots, M\}$,
with $\Zk\cap \Tk =\emptyset$ and $\Zk\cup \Tk = \mathcal{M}$.

\section{Performance Analysis}
In this section, we derive the effective SINR of the untrusted communication links as well as the effective SINR for observing in conjunction with MR and PZF combining schemes. We also investigate the potential of using large number of MNs in either the observing  or jamming mode to cancel the inter-untrusted user interference and to enhance the energy efficiency, respectively.
\subsection{Effective SINR of the Untrusted Communication Links}
We define the effective noise as
\begin{align}~\label{eq:ykdl3}
\tilde w_{k}^{\UR}
=&   \sqrt{\rho_\UT} \sum_{\ell\in \mathcal{K}, \ell \neq k}h_{\ell k} s_{\ell}^{\ul}
+\sqrt{\rho_\dl} \sum_{m \in \mathcal{M}}
\sum_{k'\in\mathcal{K}} a_m\times
\nonumber\\
 &
\quad\sqrt{\theta_{mk'}}
\left(\gmkd\right)^T\left(\hgmkpd\right)^*
s_{k'}^{\dl}\!+w_{k}^{\UR},
\end{align}
and reformulate the  signal received at UR $k$ in~\eqref{eq:ykdl}  as
 \begin{align}~\label{eq:ykdl2}
 y_k^{\UR}
 = \sqrt{\rho_\UT} h_{kk}s_{k}^{\ul}+\tilde w_{k}^{\UR}.
 \end{align}
 Since $s_{\ell}^{\ul}$ is independent of $s_{k}^{\ul}$ for any $\ell\! \neq\! k$,  the first term of the effective noise in~\eqref{eq:ykdl3} is uncorrelated with the first term in~\eqref{eq:ykdl2}. Moreover, the second and third terms of~\eqref{eq:ykdl3} are uncorrelated with the first term of~\eqref{eq:ykdl2}. Therefore, the effective noise $\tilde w_{k}^{\UR}$ and the input RV $x_{k}^{\ul}$  are uncorrelated. Accordingly, we now obtain a closed-form expression for the effective  SINR of the untrusted link $k$.
\begin{proposition}\label{Theorem:SE:SUS}
The effective SINR  of the untrusted link $k$ can be formulated as 
\begin{align}\label{eq:SINRSR}
    \SINR_{\UR,k}  (\qa, \boldsymbol \theta) = \frac{ \rho_{\UT}|  h_{kk}|^2}{\xi_k(\qa,\boldsymbol{\theta})},
\end{align}
where
\begin{align}\label{eq:zeta}
\xi_k(\qa,\boldsymbol{\theta})
=& \rho_{\UT}\!\!\!\sum_{\ell\in \mathcal{K}\setminus k} \!\betakldu\!+{\rho_\dl}\Ntx
\sum_{k'\in\mathcal{K}}\sum_{m \in \mathcal{M}}
 a_m\theta_{mk'}\betamkd\gamdmkp  
 \nonumber
 \\
&\hspace{3em} 
\nonumber\\
 &\quad
 +{\rho_\dl}\Ntx^2\Big(\sum_{m \in \mathcal{M}}a_m \sqrt{\theta_{mk}} \gamdmk\! \Big)^2+1,
 \end{align}
with
$\qa \triangleq \{a_m\}$ and $\boldsymbol{\theta}\triangleq \{\theta_{mk}\}$, $\forall m,k$,  respectively.
\end{proposition}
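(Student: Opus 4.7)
The plan is to treat $h_{kk}$ as a conditioning scalar and compute the conditional second moment of the effective noise $\tilde{w}_{k}^{\UR}$ defined in~\eqref{eq:ykdl3}. As observed just before the proposition, $\tilde{w}_{k}^{\UR}$ is uncorrelated with the useful component $\sqrt{\rho_{\UT}}\,h_{kk}s_{k}^{\ul}$, so the SINR on the untrusted link is
\begin{equation*}
\SINR_{\UR,k}(\qa,\boldsymbol{\theta}) = \frac{\rho_{\UT}\,|h_{kk}|^{2}}{\EEE\{|\tilde{w}_{k}^{\UR}|^{2}\mid h_{kk}\}},
\end{equation*}
and the remaining task is to evaluate the denominator. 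Since the three constituents of $\tilde{w}_{k}^{\UR}$ (cross-link UT interference, jamming leakage from the jamming-mode MNs, and AWGN) depend on mutually independent channels and symbols, the cross expectations vanish and the second moment splits additively. The UT contribution is immediate: $\rho_{\UT}\sum_{\ell\neq k}\betakldu$, from zero-mean unit-variance symbols and $\EEE\{|h_{\ell k}|^{2}\}=\betakldu$; the AWGN contributes~$1$.

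The main calculation is the jamming term. Using the independence of the jamming symbols $\{s_{k'}^{\dl}\}$, I first separate the sum over $k'$, reducing the task to computing, for each $k'$, $\EEE\{|\sum_{m}a_{m}\sqrt{\theta_{mk'}}(\gmkd)^{T}(\hgmkpd)^{*}|^{2}\}$. Because channels at distinct MNs are independent, the $m\neq m'$ cross terms factor into a product of expectations, so this quantity equals
\begin{equation*}
\Bigl|\sum_{m}a_{m}\sqrt{\theta_{mk'}}\,\EEE\{(\gmkd)^{T}(\hgmkpd)^{*}\}\Bigr|^{2}+\sum_{m}a_{m}\theta_{mk'}\,\var\{(\gmkd)^{T}(\hgmkpd)^{*}\},
\end{equation*}
where I have used $a_{m}^{2}=a_{m}$ for the binary mode indicators. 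To evaluate the two moments I will invoke the MMSE decomposition $\gmkd=\hgmkd+\tilde{g}_{mk}^{\dl}$ with independent summands, together with $\hgmkpd\sim\CN(\mathbf{0},\gamdmkp\,\mathbf{I}_{N})$.

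For $k'\neq k$, orthogonal pilots make $\hgmkd$ and $\hgmkpd$ independent, so $(\gmkd)^{T}(\hgmkpd)^{*}$ has zero mean and a coordinate-wise calculation yields second moment $N\betamkd\gamdmkp$. For $k'=k$, the scalar splits as $(\gmkd)^{T}(\hgmkd)^{*}=\|\hgmkd\|^{2}+(\tilde{g}_{mk}^{\dl})^{T}(\hgmkd)^{*}$, with mean $N\gamdmk$ and, using $\EEE\{\|\hgmkd\|^{4}\}=N(N+1)\gamdmk^{2}$, variance $N\betamkd\gamdmk$. Substituting these back, the $k'=k$ squared-mean piece sums coherently over MNs to $N^{2}(\sum_{m}a_{m}\sqrt{\theta_{mk}}\gamdmk)^{2}$, while the variance pieces for all $k'$ (the $k'=k$ formula coincides with the $k'\neq k$ one) combine into $N\sum_{k'}\sum_{m}a_{m}\theta_{mk'}\betamkd\gamdmkp$. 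Gathering the three pieces with the noise floor $1$ reproduces $\xi_{k}(\qa,\boldsymbol{\theta})$ in~\eqref{eq:zeta}.

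The delicate step is the $k'=k$ case: because each MN's jamming beamformer is matched to its own estimate of the channel $\gmkd$ toward UR~$k$, the aggregated jamming signal intended for link $k$ has a nonzero mean that adds \emph{coherently} across MNs, producing the $N^{2}$-scaling term that is absent when $k'\neq k$. One must cleanly split the $k'=k$ second moment into ``squared mean plus variance'' so as not to double-count the incoherent $N$-scaling contribution that is common to all $k'$.
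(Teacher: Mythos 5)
Your proposal is correct and follows essentially the same route as the paper's Appendix~A: both split the jamming term over $k'$, treat $k'\neq k$ via independence of $\hgmkpd$ from $\gmkd$, and handle $k'=k$ through the MMSE decomposition $\gmkd=\hgmkd+\boldsymbol{\varepsilon}_{mk}^{\mathrm{J}}$ together with $\EEE\{\|\hgmkd\|^4\}=N(N+1)(\gamdmk)^2$, so that the coherent cross-MN mean yields the $N^2$ term while the per-MN fluctuations give the common $N\betamkd\gamdmkp$ term. Your ``squared mean plus variance'' bookkeeping is just a tidier reorganization of the paper's direct expansion into diagonal and $m\neq n$ cross terms, and it lands on exactly \eqref{eq:zeta}.
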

\begin{proof}
See Appendix~\ref{ProofTheorem:SE:SUS}.
\end{proof}
\subsection{Effective SINR for Observing}
The CPU detects the observed information $s_{k}^{\ul}$ from $r_{k}^{\ul}$ in~\eqref{eq:rul}. We assume that it does not have instantaneous CSI knowledge of the observing, jamming, and untrusted channels and  uses only statistical CSI when performs detection.  To calculate the effective SINR for the $k$-th untrsuted link, we use the  popular bounding technique, known as the hardening bound or the use-and-then-forget (UatF) bound~\cite{Hien:cellfree}\footnote{ This bound can be used for the scenarios, where the codeword spans over the time and frequency domains, i.e., across multiple coherence times and coherence bandwidths. This is practical   and it is widely supported in the literature of ergodic rate and capacity analysis~\cite{marzetta2016fundamentals,Emil:Book:2017}.}.
 In particular, we first rewrite the aggregated received signal for UT $k$ at the CPU as
\begin{align}\label{eq:rulMR}
	r_{k}^{\CS}= &\mathbb{E}\{ {\text{DS}_k^\CS}\} s_k^\ul+\nonumber\\
 &~\text{BU}_k ^\CS s_k^\ul+\sum\limits_{\ell \in {\mathcal{K}}\backslash k }{{\text{UI}}_{\ell k} ^\CS}s_\ell^\ul+\sum\limits_{\ell \in {{\mathcal{K}}}}{\text{MI}_{\ell k} ^{\CS}}s_\ell^\dl+{{\text{AN}}_k ^\CS},
\end{align}
where  $\text{BU}_k ^\CS ={\text{DS}_k^\CS}- \mathbb{E}\{{\text{DS}_k^\CS} \}$ reflects the beamforming gain uncertainty, while  the superscript “cs” refers to the “combining scheme”, $\CS = \{\MR, \PZF\}$. The CPU effectively encounters a deterministic channel ($\mathbb{E}\{ {\text{DS}_k^\CS}\}$) associated with some unknown noise. Since $s_k$ and $s_\ell$ are uncorrelated for any $\ell \neq k$, the first term in~\eqref{eq:rulMR} is uncorrelated with the third and forth terms. Additionally, since $s_k$ is independent of $\text{BU}_{k}$, the first and second terms are also uncorrelated. The fifth term, i.e., the noise, is independent of the first term in~\eqref{eq:rulMR}. Accordingly, the sum of the second, third,  fourth, and fifth terms in~\eqref{eq:rulMR} can be collectively considered as an uncorrelated effective noise. Therefore, the received SINR of observing the untrusted link $k$ can be formulated as
\begin{align}\label{eq:SINRST}
&\SINR_{\ul,k}^\CS = \nonumber\\
&\frac{|\mathbb{E}\{{{\text{DS}}_k ^\CS}\}|^2}{{\mathbb{E}\big\{\vert {\text{BU}}_k ^\CS\vert^2\big\}
\!+\!\!\!\!\!\sum\limits_{{\ell \in {\mathcal{K}}\backslash k }} \!\!\!\!\mathbb{E}\big\{\vert {\text{UI}}_{\ell k} ^\CS\vert^2\big\} 
\!+\! \!\!\sum\limits_{{\ell \in {\mathcal{K}} }} \!\!\mathbb{E}\big\{\vert 
 {\text{MI}}_{\ell k} ^\CS\vert^2\big\} 
 \!+\! \mathbb{E}\big\{\vert {\text{AN}}_k ^\CS}\vert^2\big\}}.
\end{align}
By calculating the
corresponding expected values in~\eqref{eq:SINRST},  the SINR observed for the untrusted link $k$ for MR and PZF combining schemes can be obtained as in the following propositions.
\begin{proposition}\label{Theorem:SE:CPU}
The received SINR for the $k$-th untrusted link at the CPU for MR combining is given by
\vspace{-0.1em}
\begin{align}\label{eq:SINRMA}
&\SINR_{\ul,k}^{\MR}(\ALPHA,\qa,\boldsymbol{\theta})=\frac{
	\Nrx \rho_{\UT} \Big(\sum\limits_{\substack{m\in\mathcal{M}}} \alphmk (1-a_m )  \gamumk \Big)^2
	}
	{		 \mu_k^\MR\!+\!
		\rho_\dl\Ntx\!\!\!
		\sum\limits_{\substack{i\in\mathcal{M}}}\!
		\sum\limits_{\ell\in\mathcal{K}}\!
		\!
		a_{i}\theta_{i\ell}  \gamma_{i\ell}^{\dl} \varrho_{i  k}^\MR
},
\end{align}
with
\vspace{-0.4em}
\begin{align*}
\mu_k^\MR\triangleq& \!\!\!\sum\limits_{\substack{m\in\mathcal{M}}}\! \!
		\!\big(\rho_{\UT}\sum\limits_{\ell\in\mathcal{K}} \alphmk^2
				(1\!\!-\!a_m) 
	\beta_{m\ell}^{\ul}
		\gamma_{mk}^{\ul}
	+\alphmk^2
		(1\!-\!a_m) \gamumk\big),\nonumber\\
&\varrho_{i  k}^\MR\triangleq 		\!\sum\limits_{\substack{m\in\mathcal{M}}}\!\alphmk^2	(1-a_m)  \gamumk \beta_{mi},
  \end{align*}
  where $\ALPHA=\{\alpha_{mk}\},~\forall m ,k.$

\end{proposition}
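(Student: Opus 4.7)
The plan is to evaluate each of the four expectations in the UatF bound~\eqref{eq:SINRST} under the MR choice $\vmku=\hgmku$, exploit the independence of the small-scale fading across distinct MNs, and use the MMSE identity $\EEE\{(\hgmku)^\dag\gmku\}=\Ntx\gamumk$ together with standard quadratic-form identities to reduce each expectation to a single sum over $m$ (and possibly $i,\ell$) of per-MN second moments.

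First I would compute $\EEE\{\text{DS}_k^\MR\}$ via the MMSE decomposition $\gmku=\hgmku+\tilde{\qg}_{mk}^{\ul}$, where the estimation error is independent of the estimate and has covariance $(\betamku-\gamumk)\qI_\Ntx$; only $\|\hgmku\|^2$ survives in the mean. This yields $|\EEE\{\text{DS}_k^\MR\}|^2=\Ntx^2\rho_\UT\bigl(\sum_m\alphmk(1-a_m)\gamumk\bigr)^2$, and the single $\Ntx$ appearing in~\eqref{eq:SINRMA} will emerge only after a common factor of $\Ntx$ is cancelled between numerator and denominator at the end.

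Next I would treat the beamforming uncertainty, the multi-UT interference, and the additive noise in a unified way. Because the small-scale fading at distinct MNs is independent and, under orthogonal pilots, $\hgmku$ is also independent of $\gmlu$ for every $\ell\neq k$, all cross terms across different MNs vanish and each of these three sums collapses to a single sum over $m$ of per-MN second moments: $\var((\hgmku)^\dag\gmku)=\Ntx\gamumk\betamku$ for $\text{BU}_k^\MR$, $\EEE\{|(\hgmku)^\dag\gmlu|^2\}=\Ntx\gamumk\betamlu$ for each $\text{UI}_{\ell k}^\MR$, and $\EEE\{|(\hgmku)^\dag\qw_{m}^{\ul}|^2\}=\Ntx\gamumk$ for $\text{AN}_k^\MR$. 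Summing the three contributions produces exactly $\Ntx\,\mu_k^\MR$.

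The main effort will be the inter-MN term $\sum_\ell\EEE\{|\text{MI}_{\ell k}^\MR|^2\}$, since each summand couples three mutually independent random objects: the observing-side estimate $\hgmku$, the inter-MN channel $\qF_{mi}$ (with $\qF_{mm}=\mathbf{0}$ and entries $\CN(0,\beta_{mi})$ otherwise), and the jamming-side estimate $\hat{\qg}_{i\ell}^\dl$. Iterated conditioning together with the identity $\EEE\{\qF_{mi}\qF_{mi}^\dag\}=\Ntx\beta_{mi}\qI_\Ntx$ turns every diagonal summand into $\Ntx^2\gamumk\beta_{mi}\gamma_{i\ell}^\dl$. The delicate bookkeeping step is to verify that every off-diagonal cross term, indexed by $(m,m',i,i')$ or different $\ell$, vanishes: in each such combination at least one of the three random objects enters linearly and has zero mean, which forces the cross expectation to zero. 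Collapsing to the diagonal then yields $\rho_\dl\Ntx^2\sum_i\sum_\ell a_i\theta_{i\ell}\gamma_{i\ell}^\dl\,\varrho_{ik}^\MR$. Substituting all four pieces into~\eqref{eq:SINRST} and dividing numerator and denominator by $\Ntx$ produces~\eqref{eq:SINRMA}.
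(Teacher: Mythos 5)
Your proposal is correct and follows essentially the same route as the paper's Appendix B: the UatF bound of~\eqref{eq:SINRST}, the MMSE error decomposition $\gmku=\hgmku+\boldsymbol{\varepsilon}_{mk}^{\ul}$, independence across MNs to kill cross terms (your per-MN variance $\Ntx\gamumk\betamku$ for the beamforming-uncertainty term is exactly what the paper obtains via $\mathbb{E}\{\Vert\hgmku\Vert^4\}=\Ntx(\Ntx+1)(\gamumk)^2$), and iterated conditioning over the three independent objects $\hgmku$, $\qF_{mi}$, $\hat{\qg}_{i\ell}^{\dl}$ for the inter-MN term. All stated per-MN moments and the final cancellation of the common factor $\Ntx$ check out against~\eqref{eq:SINRMA}.
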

\begin{proof}
See Appendix~\ref{ProofTheorem:SE:CPU}.
\end{proof}
\begin{proposition}\label{Theorem:SEZF:CPU}
The received SINR for the $k$-th untrusted link at the CPU for PZF combining is given by
\begin{align}\label{eq:SINRMA_ZF1}
&\SINR_{\ul,k}^{\PZF}(\ALPHA,\qa,\boldsymbol{\theta})=\nonumber
\\
&\frac{
	\rho_{\UT} \Big(\sum\limits_{\substack{m\in\Zk}} \alphmk(1-a_m )  {\gamumk} +\Ntx\!\!\sum\limits_{\substack{m\in\Tk}} \!\alphmk(1-a_m )  {\gamumk}\!\Big)^2
	}
	{\mu_k^\PZF+
		\rho_\dl\Ntx
		\sum\limits_{\substack{i\in\mathcal{M}}}
		\sum\limits_{\ell\in\mathcal{K}}
		 a_{i}\theta_{i\ell}  {\gamma_{i\ell}^{\dl}}\varrho_{i k}^\PZF
},
\end{align}
with
\begin{align*}
&\mu_k^\PZF\triangleq\rho_{\UT}
		\! \sum\limits_{\substack{m\in\Zk}}
		 \sum\limits_{\ell\in\mathcal{K}}\!
				\alphmk^2(1-a_m) 
	\frac{\gamumk(\beta_{m\ell}^{\ul}-
		\gamuml)}{\Ntx-|\Wm|}\nonumber\\
 &+\rho_{\UT}\Ntx
		\! \sum\limits_{\substack{m\in\Tk}}
		 \sum\limits_{\ell\in\mathcal{K}}\!
				\alphmk^2(1-a_m) 
	{\gamumk\beta_{m\ell}^{\ul}}\nonumber
  \\
  &	+\sum\limits_{\substack{m\in\Zk}}
		\alphmk^2(1\!\!-\!a_m)\frac{\gamumk}{\Ntx-|\Wm|}+\Ntx\!\!\sum\limits_{\substack{m\in\Tk}}\!\!
		\alphmk^2(1\!\!-\!a_m)\gamumk.
  \end{align*}
\begin{align}
 \varrho_{i  k}^\PZF\triangleq& \sum\limits_{\substack{m\in\Zk}}
		\alphmk^2(1-a_m) \frac{\gamumk \beta_{mi}}{\Ntx-|\Wm|} \nonumber
  \\
  &\quad+\Ntx\!\sum\limits_{\substack{m\in\Tk}}
		\alphmk^2(1-a_m)  {\gamumk \beta_{mi}}.\nonumber
\end{align}

\end{proposition}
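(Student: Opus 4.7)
The plan is to mirror the pattern already used in the proof of Proposition~\ref{Theorem:SE:CPU}, applying the UatF formula~\eqref{eq:SINRST} to the PZF decomposition in~\eqref{eq:rul_pzf}, but now carefully splitting every sum over $m$ into the two disjoint index sets $\Zk$ (where $\vmkpzf=\vmkz$) and $\Tk$ (where $\vmkpzf=\vmkm$). Since these two groups use statistically different combiners, I would treat their contributions to each of the five UatF terms separately and then add them, exploiting the fact that the per-MN summands are mutually uncorrelated (the combiners of distinct MNs depend on independent channel estimates, while the self-term $\qF_{mm}=\bold{0}$ removes any self-interference in the inter-MN coupling).

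For the numerator $\bigl|\mathbb{E}\{\text{DS}_k^\PZF\}\bigr|^2$, I would write $\gmku=\hgmku+\tilde{\qe}_{mk}$ with $\tilde{\qe}_{mk}\sim\mathcal{CN}(\bold{0},(\betamku-\gamumk)\qI_\Ntx)$ independent of $\hgmku$; on the $\Zk$ branch, the orthogonality identity~\eqref{eq:ZF_prec2} immediately gives $\mathbb{E}\{(\vmkz)^\dag\gmku\}=\gamumk$, while on the $\Tk$ branch the MR computation from Appendix~\ref{ProofTheorem:SE:CPU} yields $\mathbb{E}\{(\vmkm)^\dag\gmku\}=\Ntx\gamumk$, reproducing the numerator of~\eqref{eq:SINRMA_ZF1}. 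The four denominator pieces are then obtained by per-MN variance/second-moment calculations: $\mathbb{E}\{|\text{BU}_k^\PZF|^2\}$ reduces to $\mathrm{Var}\bigl((\vmkpzf)^\dag\gmku\bigr)$ summed over $m$, $\mathbb{E}\{|\text{UI}_{\ell k}^\PZF|^2\}$ to $\mathbb{E}\{|(\vmkpzf)^\dag\gmlu|^2\}$ summed over $m$, $\mathbb{E}\{|\text{MI}_{\ell k}^\PZF|^2\}$ to $\Ntx\beta_{mi}\gamma_{i\ell}^\dl\,\mathbb{E}\{\|\vmkpzf\|^2\}$ per $(m,i)$ pair (using that $\qF_{mi}$ has i.i.d.\ $\mathcal{CN}(0,\beta_{mi})$ entries independent of $\vmkpzf$ and $\hat{\qg}_{i\ell}^\dl$), and $\mathbb{E}\{|\text{AN}_k^\PZF|^2\}$ directly to $\mathbb{E}\{\|\vmkpzf\|^2\}$. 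On the MR branch every such quantity reduces to moments already tabulated in the proof of Proposition~\ref{Theorem:SE:CPU}.

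The technical hinge, and the main obstacle, is the evaluation of the ZF-branch statistics. I would first rescale the columns of $\hGm$ by $1/\sqrt{\gamma_{m\ell}^{\ul}}$ to obtain an $\Ntx\times|\Wm|$ matrix $\tilde{\qG}_m$ with i.i.d.\ $\mathcal{CN}(0,1)$ entries, so that $\vmkz=\sqrt{\gamumk}\,\tilde{\qG}_m(\tilde{\qG}_m^\dag\tilde{\qG}_m)^{-1}\qe_k$; the central inverse-Wishart identity $\mathbb{E}\{(\tilde{\qG}_m^\dag\tilde{\qG}_m)^{-1}\}=\qI_{|\Wm|}/(\Ntx-|\Wm|)$ then yields the key relation $\mathbb{E}\{\|\vmkz\|^2\}=\gamumk/(\Ntx-|\Wm|)$ that produces every $1/(\Ntx-|\Wm|)$ factor in $\mu_k^\PZF$ and $\varrho_{ik}^\PZF$. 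Combined with $(\vmkz)^\dag\hgmlu=\gamumk\delta_{k\ell}$ for $\ell\in\Wm$ and with the independence of $\vmkz$ from $\{\hgmlu:\ell\notin\Wm\}$, a short bookkeeping argument collapses the $\ell\in\Wm$ (projected) and $\ell\in\Sm$ (unprojected) contributions into the compact form displayed in the proposition.

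Finally, I would assemble the four denominator terms (BU, UI, MI, AN) on both the $\Zk$ and $\Tk$ branches, group them according to whether they multiply $\rho_\UT$, $\rho_\dl$, or $1$, and verify that they recover $\mu_k^\PZF$ together with the inter-MN interference sum $\rho_\dl\Ntx\sum_{i,\ell}a_i\theta_{i\ell}\gamma_{i\ell}^\dl\varrho_{ik}^\PZF$, thereby completing the derivation of~\eqref{eq:SINRMA_ZF1}.
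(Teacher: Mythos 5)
Your proposal is correct and follows essentially the same route as the paper's Appendix C: applying the UatF bound to the $\Zk$/$\Tk$-split decomposition, reducing each denominator term to per-MN second moments, and hinging everything on $\mathbb{E}\{\|\vmkz\|^2\}=\gamma_{mk}^{\ul}/(\Ntx-|\Wm|)$ (which you derive via whitening and the inverse-Wishart mean, where the paper simply cites Lemma 2.10 of Tulino--Verd\'u) together with the ZF orthogonality identity and the MR moments already tabulated for Proposition~\ref{Theorem:SE:CPU}. Your direct per-MN variance decomposition of the BU term is an equivalent, slightly cleaner packaging of the paper's ``second moment minus squared mean'' computation.
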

\begin{proof}
See Appendix~\ref{ProofTheorem:SEZF:CPU}.
\end{proof}
\subsection{Large-$M$ Analysis}
In this subsection, we provide some insights into the performance of CF-mMIMO surveillance systems when the number $M_\ul$ of MNs in observing mode  or the number $M_\dl$ of MNs in jamming mode  is very large.  The asymptotic results are presented for MR combining, while the same method and insights can be obtained for PZF combining. 

\emph{1) Using Large Number of MNs in Observing Mode, $ M_\ul \rightarrow \infty$}:\label{subsec:MOinfy}
Assume that the number of untrusted pairs, $K$, is fixed. For any finite $M_\dl$, as $M_\ul\rightarrow \infty$, we have the following results for the  signal received at the CPU for observing UT $k$ employing the MR combining. 
By using Tchebyshev’s theorem~\cite{cramer70}, we obtain  
\begin{align}
& \frac{1}{M_\ul}{\text{DS}}_k ^\MR s_k ^\ul \!-\! \frac{1}{M_\ul}\sqrt {{\rho _\UT}}\!\! \sum\limits_{m \in {\mathcal{M}}}\!\!N\alphmk(1\!-\!a\!_m)\gamumk 
s_k ^\ul \! \xrightarrow[{M_\ul \to \infty }]{P}0,
\nonumber\\
& \frac{1}{M_\ul} \sum\limits_{\ell \in {\mathcal{K}}\backslash k } \text{UI}_{\ell k}^\MR s_\ell^\ul\xrightarrow[{M_\ul \to \infty }]{P}0, \nonumber\\
 &\frac{1}{M_\ul}{\text{AN}}_k ^\MR\xrightarrow[{M_\ul \to \infty }]{P}0,\nonumber\\
 &\frac{1}{M_\ul} \sum\limits_{\ell \in {\mathcal{K}}} \text{MI}_{\ell k} ^\MR s_\ell^\dl\xrightarrow[{M_\ul \to \infty }]{P}0,
\end{align}
where $\xrightarrow[{M_\ul \to \infty }]{P}0$  shows convergence in probability when $M_\ul\rightarrow \infty$. 
The above expressions show that when $M_\ul \to \infty$,
the observed signal includes only the desired signal. The monitoring performance can improve without limit by using more MNs in observing mode.

\emph{2) Using Large Number of MNs in Jamming Mode, $ M_\dl \rightarrow \infty$}:
Assume that  the number of MNs in jamming mode goes to infinity, while transmit power of each MN in jamming mode is scaled with $ M_\dl$ according
to $\rho_\dl= \frac{E_\dl}{M_\dl}$, where $E_\dl$  is fixed.
The aggregated received signal expression in~\eqref{eq:rulMR}  for the MR combining scheme  shows that   $\text{MI}_{\ell k} ^\MR$ is
dependent on $M_\dl$; however $\text{DS}_k ^\MR,  \text{UI}_{\ell k} ^\MR$,  and $\text{AN}_k ^\MR$  are constant with
respect to $M_\dl$.  Now, let us assume that the number of untrusted pairs, $K$, is fixed. For any finite $M_\ul$, when $M_\dl\rightarrow \infty$ and $\rho_\dl= \frac{E_\dl}{M_\dl}$, we have 
\vspace{-0.5em}
\begin{align}
&\text{MI}_{\ell k} ^{\MR}= \sqrt{\frac{E _\dl}{M_\dl}}\sum\limits_{m \in {\mathcal{M}}}\!\!\alphmk(1-a_m)  
 {(\hgmku)^\dag} \qt_{\ell m} ,
\end{align}
where
\vspace{-0.5em}
\begin{align}
\qt_{\ell m} &= \sum\limits_{i \in {\mathcal{M}}}a_i\sqrt{\theta_{i\ell}}{\qF_{mi}}{({\mathbf{\hat g}}_{i\ell}^\dl)^{\ast}}\nonumber\\
&= \frac{\sum\limits_{i \in {\mathcal{M}}}a_i\sqrt{\theta_{i\ell}}{\qF_{mi}}{({\mathbf{\hat g}}_{i\ell}^\dl)^{\ast}}}{\sqrt{\sum\limits_{i \in {\mathcal{M}}}a_i\theta_{i\ell}{\beta_{mi}}
{\Vert{\mathbf{\hat g}}_{i\ell}^\dl\Vert^2}}}
{\sqrt{\sum\limits_{i \in {\mathcal{M}}}a_i\theta_{i\ell}{\beta_{mi}}
{\Vert{\mathbf{\hat g}}_{i\ell}^\dl\Vert^2}}}.
\end{align}
Now, let use define 
\vspace{-0.5em}
\begin{align*}
  \qz_{\ell m} \triangleq \frac{\sum\limits_{i \in {\mathcal{M}}}a_i\sqrt{\theta_{i\ell}}{\qF_{mi}}{({\mathbf{\hat g}}_{i\ell}^\dl)^{\ast}}}{\sqrt{\sum\limits_{i \in {\mathcal{M}}}a_i\theta_{i\ell}{\beta_{mi}}
{\Vert{\mathbf{\hat g}}_{i\ell}^\dl\Vert^2}}}. 
\end{align*}
For given $\{\mathbf{\hat g}_{i\ell}^\dl$\}, $\qz_{\ell m}$  is distributed on  $\mathcal{CN} (0, \qI_\Ntx)$. Therefore, $\qz_{\ell m}\sim \mathcal{CN} (0, \qI_\Ntx)$ is independent of  $\{\mathbf{\hat g}_{i\ell}^\dl$\}. Thus, we have
\vspace{-0.5em}
\begin{align}\label{eq:qtell}
\qt_{\ell m} &={\sqrt{\sum\limits_{i \in {\mathcal{M}}}a_i\theta_{i\ell}{\beta_{mi}}
{\Vert{\mathbf{\hat g}}_{i\ell}^\dl\Vert^2}}}\qz_{\ell m}.
\end{align}
By using \eqref{eq:qtell} and Tchebyshev’s theorem, we obtain 
\begin{align}
& \frac{1}{\sqrt{M_\dl}}\qt_{\ell m} - \sqrt 
{\frac{1}{M_\dl}{\sum\limits_{i \in {\mathcal{M}}}  }\Ntx a_i\theta_{i\ell}\beta_{mi}\gamma_{i\ell}^\dl}\qz_{\ell m} \xrightarrow[{M_\dl \to \infty }]{P}0.
\end{align}
 As a result,
\begin{align}\label{eq:MI_asmp}
&\sum\limits_{\ell \in \mathcal{K}}{\text{MI}}_{\ell k} ^\MR s_\ell^\dl -\sqrt {{E_\dl}} \sum\limits_{\ell \in {{\mathcal{K}}}} \sum\limits_{m \in {\mathcal{M}}}\alphmk (1-a_m)(\hgmku)^\dag\times\nonumber
\\
&\hspace{5em}\sqrt{\frac{1}{M_\dl}{\sum\limits_{i \in {\mathcal{M}}}  }\Ntx a_i\theta_{i\ell}\beta_{mi}\gamma_{i\ell}^\dl} \qz_{\ell m}s_\ell^\dl \xrightarrow[{M_\dl \to \infty }]{P}0.
\end{align}
Expression in~\eqref{eq:MI_asmp} shows that for large $M_\dl$, we can reduce the transmitted jamming power of each MN in jamming mode proportionally to $1/M_\dl$,  while maintaining the  given SINR for observing. At the same time, from \eqref{eq:ykdl}, by using again the Tchebyshev’s theorem, we can show that the SINR for the untrusted commnication links goes to $0$, as $M_\dl$ goes to infinity.  This verifies 
the potential of using a large number of  MNs in jamming mode to save power and, hence, enhance the energy efficiency of CF-mMIMO surveillance systems.
\subsection{Monitoring Success Probability}
To achieve  reliable detection at UR $k$, UT $k$ varies its transmission rate according to the prevalent $\SINR_{\UR,k}$. In particular the $k$-th UR provides SINR feedback to the $k$-th UT concerning its perceived channel quality. Based on this feedback, the UT dynamically adapts its modulation and coding scheme. Higher SINR values allow for higher data rates, while lower SINR values  necessitate lower data rates to maintain reliable communication.
Hence, if $\SINR_{\ul,k}^\CS\geq\SINR_{\UR,k}$,  the CPU can also reliably detect the information of the untrusted link $k$. On the other hand, if $\SINR_{\ul,k}^\CS\leq\SINR_{\UR,k}$, the CPU may detect this information at a high probability of error. Therefore, the following indicator function can be designed for characterizing the event of
successful monitoring at the CPU~\cite{Zhong:TWC:2017}
\vspace{-0.5em}
\begin{align*}
X^\CS_k = \begin{cases} \displaystyle 1 & \text {if } \SINR_{\ul,k}^\CS\geq\SINR_{\UR,k},\\ \displaystyle 0 & \text {otherwise},
\end{cases}
\end{align*}
where $X^\CS_k=1$ and $X^\CS_k=0$ indicate the monitoring success and failure events for the untrusted link $k$,  respectively. 
Thus, a suitable performance metric for  monitoring
each untrusted communication link $k$ is the MSP, $\mathbb {E}\{X^\CS_k\}$, defined as 
\begin{equation} \label{eq:non-outage}
\mathbb {E}\{X^\CS_k\}= \text {Pr}\left (\SINR^\CS_{\ul,k}\geq\SINR_{\UR,k}\right)\!.
\end{equation}
From~\eqref{eq:SINRSR},~\eqref{eq:SINRMA}, and~\eqref{eq:non-outage} we have
\begin{equation}
\mathbb {E}\{X^\CS_k\}=\mathbb {P}\Big(| h_{kk}|^2\leq \frac{\SINR^\CS_{\ul,k} \xi_k}{\rho_{\UT}}\Big).
\end{equation}
Using the cumulative distribution function (CDF) of the exponentially distributed RV  $|h_{kk}|^2$, the MSP  of our CF-mMIMO surveillance system can be expressed in closed form as
\begin{equation} \label{eq:Nonout}
\mathbb {E}\{X^\CS_k\}=1-\exp \left(-\frac{\SINR^\CS_{\ul,k} \xi_k}{\betakkdu\rho_{\UT}}\right). \end{equation}

\section{Max-Min MSP Optimization}
\label{sec:SE}
In this section, we aim for maximizing the lowest probability of successful monitoring by  optimizing the MN-weighting  coefficients $\ALPHA$, the observing and jamming mode assignment vector $\aaa$, and the power control coefficient vector $\THeta$ under the constraint of the transmit power at each MN in~\eqref{DL:power:cons}. 
More precisely,  we formulate an optimization problem as
\vspace{-0.5em}
\begin{subequations}\label{P:SE21}
	\begin{align}
		\underset{\{ \ALPHA,\qa,\boldsymbol \theta\}}{\mathrm{max}}\,\, &\hspace{1em}
		\underset{k\in\mathcal{K}} \min \,\,\mathbb {E}\{X^\CS_k ({\boldsymbol{\ALPHA,\qa,\theta}}) \}
		\\
		\mathrm{s.t.} \,\,
		& \hspace{1em} a_m\sum_{k\in\mathcal{K}} \gamdmk \theta_{mk} \leq \frac{1}{\Ntx},~~  m\in\mathcal{M},\label{opt:cons1}\\
  &\hspace{1em}\theta_{mk}\geq 0, ~~ m\in\mathcal{M},~k\in\mathcal{K}, \label{opt:cons2}\\
    &\hspace{1em}0\leq\alphmk\leq 1, ~~ \quad \forall k, m,\label{opt:cons3}\\
  &\hspace{1em}a_{m} \in \{0,1\}, ~~ m\in\mathcal{M}. 
 \label{opt:cons4}
   \end{align}
\end{subequations}
By substituting~\eqref{eq:Nonout} into~\eqref{P:SE21}, the optimization problem \eqref{P:SE21} becomes
\vspace{-0.5em}
\begin{subequations}\label{P:SE2}
	\begin{align}
		\underset{\{ \ALPHA,\qa,\boldsymbol \theta\}}{\mathrm{max}}\,\, &\hspace{1em}
		\underset{k\in\mathcal{K}} \min \,\,1-\exp \left(-\frac{\SINR^\CS_{\ul,k}({\boldsymbol{\ALPHA,\qa,\theta}}) \xi_k({\boldsymbol{\qa,\theta}})}{\betakkdu\rho_{\UT}}\right) 
		\\
		\mathrm{s.t.} \,\,
		& \hspace{1em} a_m\sum_{k\in\mathcal{K}} \gamdmk \theta_{mk} \leq \frac{1}{\Ntx},~~  m\in\mathcal{M},\label{opt:cons1}\\
  &\hspace{1em}\theta_{mk}\geq 0, ~~ m\in\mathcal{M},~k\in\mathcal{K}, \label{opt:cons2}\\
    &\hspace{1em}0\leq\alphmk\leq 1, ~~ \quad \forall k, m,\label{opt:cons3}\\
  &\hspace{1em}a_{m} \in \{0,1\}, ~~ m\in\mathcal{M}. 
 \label{opt:cons4}
   \end{align}
\end{subequations}
By using the fact that  $1-\exp(-x)$ is a monotonically increasing function of $x$ and since $\betakkdu$ and $\rho_{\UT}$ are fixed values independent of the optimization variables, the problem~\eqref{P:SE2} is equivalent to the following problem
\vspace{-0.5em}
\begin{subequations}\label{P:SE3}
\	\begin{align}
		&\underset{\{\ALPHA,\qa,\boldsymbol \theta\}}{\mathrm{max}}\,\, \hspace{1em}
		\underset{k\in\mathcal{K}} \min \,\,\SINR^\CS_{\ul,k}(\ALPHA,\qa,\boldsymbol \theta)\xi_k(\qa,\boldsymbol \theta)
		\\
  &~\text {s.t.} \hspace{2em}~\eqref{opt:cons1}-\eqref{opt:cons4}.
	\end{align}
 \end{subequations}
 %
Problem~\eqref{P:SE3}   has as a tight coupling of the MN-weighting  coefficients $\ALPHA$, of  the observing and jamming mode assignment vector $\aaa$, and of the power control coefficient vector $\THeta$. In particular, observe from~\eqref{eq:SINRMA} and~\eqref{eq:SINRMA_ZF1}, that
in $\SINR^\CS_{\ul,k}$, the power coefficients $\theta_{i\ell}$ are coupled with the    mode assignment parameters $a_{i}$. Furthermore, the mode assignment parameters $a_m$ are also coupled with the  MN-weighting coefficients  $\alphmk$. Therefore, problem~\eqref{P:SE3}  is not jointly convex in terms of $\ALPHA$, of the power allocation coefficients $\boldsymbol \theta$, and of the mode assignment  $\qa$.  This issue makes the max-min MSP problem technically challenging,  hence it is difficult to find its optimal solution.
Therefore, instead of finding the optimal solution, we aim for finding a suboptimal solution. To  this end,  we conceive  a heuristic greedy method for MN mode assignment, which simplifies the computation, while providing a significant successful monitoring performance gain. In addition, for a given mode assignment, the max-min MSP problem can be formulated as the following optimization framework:
\vspace{-0.5em}
\begin{subequations}\label{P:SE4}
\	\begin{align}
		&\underset{\{\ALPHA,\boldsymbol \theta\}}{\mathrm{max}}\,\, \hspace{1em}
		\underset{k\in\mathcal{K}} \min \,\,\SINR^\CS_{\ul,k}(\ALPHA,\boldsymbol \theta)\xi_k(\boldsymbol \theta)
		\\
  &~\text {s.t.} \hspace{2em}~\eqref{opt:cons1}-\eqref{opt:cons3}.
	\end{align}
 \end{subequations}
Problem~\eqref{P:SE4} is not jointly convex in terms of $\ALPHA$ and power allocation $\boldsymbol \theta$. To tackle this non-convexity issue, we cast the  optimization problem~\eqref{P:SE4} into two sub-problems: the MN-weighting  control problem  and the power allocation problem. To obtain a solution for problem~\eqref{P:SE4}, these sub-problems are alternately solved, as outlined in the following subsections.

\begin{algorithm}[!t]
\caption{Greedy MN Mode Assignment}
\begin{algorithmic}[1]
\label{alg:Grreedy} 
\STATE
\textbf{Initialize}: Set  $\mathcal{A}_{\ul}=\mathcal{M}$ and $\mathcal{A}_{\dl}=\emptyset$. Set iteration index $i=0$.
\STATE Calculate $\Pi^\star[i]=  		\underset{k\in\mathcal{K}} \min \,\,\mathbb {E}\{X^\CS_k (\mathcal{A}_{\ul}, \mathcal{A}_{\dl})\}$
\REPEAT
\FORALL{$m \in \mathcal{A}_{\ul}$}
\STATE Set $\mathcal{A}_{s}=\mathcal{A}_{\ul} \setminus m$.
\STATE  Calculate $\Pi_m=  		\underset{k\in\mathcal{K}} \min \,\,\mathbb {E}\{X^\CS_k (\mathcal{A}_{s}, \mathcal{A}_{\dl}\bigcup m)\}$\\
\ENDFOR
\STATE Set $\Pi^\star[i+1]= \underset{m\in\mathcal{A}_{\ul}} \max \,\,\Pi_m$\\
\STATE $e=|\Pi^\star[i+1]- \Pi^\star[i]|$ 
\IF{$e \geq e_{\min}$ }
\STATE Select MN $m^\star=\argmax_{m\in\mathcal{A}_{\ul}}\{\Pi_m\}$
\STATE {Update $\mathcal{A}_{\dl}=\{\mathcal{A}_{\dl}\bigcup m^{\star}\}$ and $\mathcal{A}_{\ul}=\mathcal{A}_{\ul}\setminus m^{\star}$}
\ENDIF
\UNTIL{ $e < e_{\min}$ }
\RETURN $\mathcal{A}_{\ul}$ and $\mathcal{A}_{\dl}$, i.e., the indices of MNs in observing mode and jamming mode, respectively.
\end{algorithmic}
\end{algorithm}
\setlength{\textfloatsep}{0.2cm}
\subsection{Greedy MN Mode Assignment for  Fixed Power Control and MN-Weighting  Control }
Let $\mathcal{A}_{\ul}$ and $\mathcal{A}_{\dl}$ denote the sets containing the indices of MNs in observing mode, i.e., MNs with $a_m=0$, and MNs in jamming mode, i.e., MNs with $a_m=1$, respectively. 
In addition, $\mathbb {E}\{X^\CS_k(\mathcal{A}_{\ul}, \mathcal{A}_{\dl})\}$ presents the dependence of the MSP  on the different choices of MN mode assignments. Our greedy algorithm of MN mode assignment is shown in \textbf{Algorithm~\ref{alg:Grreedy}}. All MNs are
initially assigned to observing mode, i.e., $\mathcal{A}_{\ul}=\mathcal{M}$ and $\mathcal{A}_{\dl}=\emptyset$. Then, in each iteration,  one MN switches into jamming mode for maximizing the minimum MSP~\eqref{eq:Nonout} among the untrusted links, until there is no more improvement.
\vspace{-1em}
\subsection {Power Control for  Fixed  MN Mode Assignment and MN-Weighting  Control }
For the given MN mode assignment and MN-weighting coefficient control, the optimization problem~\eqref{P:SE3}  reduces to the power control problem. Using~\eqref{eq:zeta},~\eqref{eq:SINRMA},~\eqref{eq:SINRMA_ZF1} and~\eqref{P:SE4}, the max-min MSP problem is now formulated as
\vspace{-0.2em}
\begin{subequations}\label{eq:PAoptimization1}
\begin{align}
&\max _{\boldsymbol{\theta}}~\min \limits _{\forall k\in\mathcal{K}}  \frac {\xi_k(\boldsymbol{\theta})}
{	  \mu_{k}^\CS\!\!+\!\rho_\dl\Ntx\!
		\!\sum\limits_{\substack{i\in\mathcal{M}}}
		\sum\limits_{\ell\in\mathcal{K}}
		\!
		 a_{i}\theta_{i\ell}  \gamma_{i\ell}^{\dl} \varrho_{i  k}^\CS
		}. \\[0.5pt]
&\text {s.t.} \hspace{2em}~\eqref{opt:cons1},~\eqref{opt:cons2}.
\end{align}
\end{subequations}
By introducing the slack variable $\zeta$, we  reformulate~\eqref{eq:PAoptimization1} as
\vspace{-0.4em}
\begin{subequations}~\label{eq:PAoptimization}
\begin{align} 
&\max_{\{\boldsymbol{\theta}, \zeta\}}~ \zeta \\
&\text {s.t} \hspace{2em}	\rho_\dl\Ntx
		\sum\limits_{\substack{i\in\mathcal{M}}}
		\sum\limits_{\ell\in\mathcal{K}}
		 a_{i}\theta_{i\ell}   \gamma_{i\ell}^{\dl} \varrho_{i  k}^\CS
          - \frac{1}{\zeta}{\xi}_k(\boldsymbol{\theta}) \nonumber\\
          & \hspace{4em}+\mu_{k}^\CS\leq 0,
~~\forall k\in\mathcal{K}.
\label{eq:cons_zet}
 \\
&\hspace{2em}~\eqref{opt:cons1},~\eqref{opt:cons2}.
\end{align}
\end{subequations}
To arrive at a  computationally more efficient formulation, we use the inequality $\left(\sum_{m \in \mathcal{M}}\sqrt{\theta_{mk}}\gamdmk\right)^2 \geq \sum_{m \in \mathcal{M}}\theta_{mk}(\gamdmk)^2$ and replace the constraint~\eqref{eq:cons_zet}  by
\vspace{-0.4em}
\begin{align}
&\rho_\dl\Ntx
		\sum\limits_{\substack{i\in\mathcal{M}}}
		\sum\limits_{\ell\in\mathcal{K}}
		 a_{i}\theta_{i\ell}   \gamma_{i\ell}^{\dl}\varrho_{i k}^\CS
         - \frac{1}{\zeta}\tilde{\xi}_k(\boldsymbol{\theta}) +\mu_{k}^\CS  \leq 0,
\end{align}
where
\begin{align}\label{eq:zetat}
&\tilde{\xi}_k\!(\boldsymbol{\theta}) 
\!=\!\rho_{\UT}\!\!\!\sum_{\ell\in \mathcal{K}\setminus k} \!\betakldu\!+\!
{\rho_\dl}\Ntx
\!\sum_{k'\in\mathcal{K}}\sum_{m \in \mathcal{M}}\!
a_m\theta_{mk'}\betamkd\gamdmkp
\nonumber\\
&\qquad\quad
+{\rho_\dl} \Ntx^2\sum_{m \in \mathcal{M}}a_m\theta_{mk}(\gamdmk)^2
+
1.
 \end{align}
Now, for a fixed $\zeta$, all the inequalities appearing in~\eqref{eq:PAoptimization} are linear,  hence the program~\eqref{eq:PAoptimization} is quasi–linear. 
Since the second constraint in~\eqref{eq:PAoptimization} is an
increasing functions of $\zeta$, the solution to the optimization problem is obtained by harnessing a line-search over
$\varrho_{ik}^\CS$ to find the maximal feasible value.
As a consequence, we use the  bisection
method in \textbf{Algorithm 2} to obtain the solution.

\begin{algorithm}[t]
\caption{Bisection Method for Max-Min Power Control }
\label{Alg:PA}
\begin{algorithmic}[1]
\STATE Initialization of $\zeta_{\min}$ and $\zeta_{\max}$, where $\zeta_{\min}$ and $\zeta_{\max}$ define a range of relevant values of the objective function in~\eqref{eq:PAoptimization1}. Initial line-search accuracy  $\epsilon$.
\REPEAT
    \STATE Set $\zeta:=\frac{\zeta_{\min}+\zeta_{\max}}{2}$. Solve the following convex feasibility program
    \begin{align}\label{eq:Feasibility}
    \begin{cases} \hspace {0.0 cm}
&\!\!\!\!\!{{\mu_{k}^\CS\!+\!\rho_\dl\Ntx\!\!
		\sum\limits_{\substack{i\in\mathcal{M}}}
		\sum\limits_{\ell\in\mathcal{K}}
		\!
		 }} 
 a_{i}\theta_{i\ell}  \gamma_{i\ell}^{\dl}\varrho_{i k}^\CS\!-\!{\frac{1}{\zeta}\tilde{\xi}_k(\boldsymbol{\theta})\! \leq\! 0,
~\forall k\in\mathcal{K}},
  \\
&\!\!\!\!a_m\sum_{k\in\mathcal{K}} \gamdmk \theta_{mk}\leq \frac{1}{\Ntx}, ~ \forall m\in\mathcal{M}, \\
&\!\!\!\!\theta_{mk} \geq 0, ~ \forall k\in\mathcal{K}, ~ \forall m\in\mathcal{M}.
 \end{cases} 
\end{align}
\STATE If problem~\eqref{eq:Feasibility} is feasible, then set $\zeta_{\min}:=\zeta$, else set $\zeta_{\max} :=\zeta$.
\UNTIL{ $\zeta_{\max}-\zeta_{\min}<\epsilon$ }
\end{algorithmic}
\end{algorithm}
\vspace{-0.4em}
\subsection {MN-Weighting  Control  for  Fixed    MN Mode Assignment and  Power Control}
The received SINR at the URs is independent of the MN-weighting  coefficients $\ALPHA$. Therefore, the  coefficients $\ALPHA$ can be obtained by independently maximizing the received  SINR of each untrusted link $k$ at the CPU. Therefore, the optimal  MN-weighting   coefficients for all UTs for the given transmit
power allocations and mode assignment, can be found by solving the following problem:
\begin{subequations}\label{P:LSFD}
\	\begin{align}
		&\underset{\ALPHA}{\mathrm{max}}\,\, \hspace{1em}
		 \,\,\SINR^\CS_{\ul,k}(\ALPHA)
		\\
  &~\text {s.t.}   \hspace{1em}0\leq\alphmk\leq 1, ~~ \quad \forall k, m. 
	\end{align}
 \end{subequations}
Let us introduce a pair of binary variables to indicate the group assignment for each UT $k$ and MN $m$ in our PZF combining scheme as
\begin{align*}
\dm^{\SZ} = \begin{cases} \displaystyle 1 & \text {if } m \in \Zk,\\ \displaystyle 0 & \text {otherwise},
\end{cases}
\qquad
\dm^{\OT} = \begin{cases} \displaystyle 1 & \text {if } m \in \Tk,\\ \displaystyle 0 & \text {otherwise}.
\end{cases}
\end{align*}
Then, to solve~\eqref{P:LSFD}, we use the following proposition:
\begin{proposition} \label{theorem:lsfd}
 The optimal MN-weighting  coefficient vector, maximizing the   SINR  observed for the $k$-th untrusted link  can  be obtained as 
\begin{align}
\ALPHA_k^\star=\diag(\qB_{1k}^\CS,\ldots,\qB_{Mk}^\CS)^{-1}\qc_k^\CS,
\end{align}
where    $ \qB_{mk}^\CS=u_{mk}^\CS+ \rho_\dl\Ntx
		\sum\limits_{\substack{i\in\mathcal{M}}}
		\sum\limits_{\ell\in\mathcal{K}}
		 a_{i}\theta_{i\ell}  {\gamma_{i\ell}^{\dl}}v_{m i k}^\CS,$
$\qc_k^\CS=[c_{1k}^\CS,\cdots,c_{Mk}^\CS]$ with elements $c_{mk}^\MR = (1-a_m)\sqrt{\Ntx}\gamma_{mk}$ and 
\begin{align*}
c_{mk}^\PZF = \begin{cases} \displaystyle (1-a_m)\gamma_{mk} & \text {if } m \in \Zk,\\ \displaystyle \displaystyle (1-a_m)\Ntx\gamma_{mk} & \text {if } {m \in \Tk},
\end{cases}
\end{align*}

\begin{align*}
&u_{m k}^\MR= \rho_{\UT} 
		 \sum\limits_{\ell\in\mathcal{K}} 
				(1-a_m) 
	\beta_{m\ell}^{\ul}
		\gamma_{mk}^{\ul}
	+
		(1-a_m) \gamumk,\nonumber\\
&v_{m i  k}^\MR= (1-a_m)  \gamumk \beta_{mi},
\\
&u_{mk}^\PZF=(1-a_m) \big(\rho_{\UT}
		 \sum\limits_{\ell\in\mathcal{K}}\dm^{\SZ}
					\frac{\gamumk(\beta_{m\ell}^{\ul}-
		\gamuml)}{\Ntx-|\Wm|}+\rho_{\UT}\Ntx \times\nonumber
  \\
&\quad 
				 \sum\limits_{\ell\in\mathcal{K}} \dm^{\OT}
					{\gamumk\beta_{m\ell}^{\ul}}\nonumber
	+\dm^{\SZ}
		\frac{\gamumk}{\Ntx-|\Wm|}+\Ntx\dm^{\OT}
		\gamumk\big),
 \\
  & 
 v_{m i k}^\PZF= \!\dm^{\SZ}
		(\!1\!-\!a_m\!) \frac{\gamumk \beta_{mi}}{\Ntx\!-\!|\Wm|} \!+\!\dm^{\OT}
		\!\Ntx\!
		(1\!-\!a_m)  \gamumk \beta_{mi}.
  \end{align*}

\end{proposition}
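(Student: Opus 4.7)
The plan is to recognize that, in the closed-form SINR expressions of Propositions~\ref{Theorem:SE:CPU} and~\ref{Theorem:SEZF:CPU}, the dependence on the weight vector $\ALPHA_k=[\alpha_{1k},\ldots,\alpha_{Mk}]^T$ is a generalized Rayleigh quotient, so a single Cauchy--Schwarz step settles both combining schemes simultaneously.

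First I would rewrite each SINR as
\begin{equation*}
\SINR^\CS_{\ul,k}(\ALPHA_k)=\frac{(\ALPHA_k^T \qc_k^\CS)^2}{\ALPHA_k^T \qB_k^\CS \ALPHA_k},
\end{equation*}
where $\qc_k^\CS=[c_{1k}^\CS,\ldots,c_{Mk}^\CS]^T$ is the vector read off from the numerator of~\eqref{eq:SINRMA}/\eqref{eq:SINRMA_ZF1}, and $\qB_k^\CS=\diag(B_{1k}^\CS,\ldots,B_{Mk}^\CS)$ collects the per-MN coefficients of $\alpha_{mk}^2$ appearing in $\mu_k^\CS$ and in the inter-MN term $\rho_\dl\Ntx\sum_i\sum_\ell a_i\theta_{i\ell}\gamma_{i\ell}^\dl \varrho_{ik}^\CS$. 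The crucial point is that $\qB_k^\CS$ is \emph{diagonal}: no cross term $\alpha_{mk}\alpha_{m'k}$ with $m\neq m'$ appears, because the beamforming-gain uncertainty, cross-link interference, inter-MN interference, and noise contributions from distinct MNs were already shown in the UatF-based derivation of the SINR to be pairwise uncorrelated. Reading off the coefficient of $\alpha_{mk}^2$ for each $m$ yields the stated $u_{mk}^\CS$ and $v_{mik}^\CS$; for PZF this reading must split the index $m$ into the cases $m\in\Zk$ and $m\in\Tk$, which is precisely what the indicators $\delta_m^{\Zk},\delta_m^{\Tk}$ encode.

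Second I would apply the Cauchy--Schwarz inequality in the weighted form
\begin{equation*}
(\ALPHA_k^T \qc_k^\CS)^2 \le (\ALPHA_k^T\qB_k^\CS\ALPHA_k)\bigl((\qc_k^\CS)^T(\qB_k^\CS)^{-1}\qc_k^\CS\bigr),
\end{equation*}
so that $\SINR^\CS_{\ul,k}\le (\qc_k^\CS)^T(\qB_k^\CS)^{-1}\qc_k^\CS$ with equality iff $\ALPHA_k$ is parallel to $(\qB_k^\CS)^{-1}\qc_k^\CS$. Since $\qB_k^\CS$ is diagonal, this reduces componentwise to $\alpha_{mk}^\star \propto c_{mk}^\CS/B_{mk}^\CS$, which is exactly the stated $\ALPHA_k^\star=\diag(B_{1k}^\CS,\ldots,B_{Mk}^\CS)^{-1}\qc_k^\CS$.

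Finally I would verify feasibility. The Rayleigh quotient is invariant under any positive scaling of $\ALPHA_k$, so the box constraint $0\le\alpha_{mk}\le1$ can always be satisfied by rescaling the direction $(\qB_k^\CS)^{-1}\qc_k^\CS$; the nonnegativity $\alpha_{mk}^\star\ge0$ follows because both $c_{mk}^\CS$ and $B_{mk}^\CS$ are nonnegative by construction. For MNs with $a_m=1$ (jamming mode) one has $c_{mk}^\CS=B_{mk}^\CS=0$, and the natural convention $\alpha_{mk}^\star=0$ applies, consistent with jamming-mode MNs not contributing to observing. The only laborious step I expect is the algebraic reformulation of $\mu_k^\CS$ and the inter-MN term as a diagonal quadratic form in $\ALPHA_k$ with the exact coefficients given in the statement: it is straightforward for MR, and for PZF it requires carefully splitting the sums over $m$ according to whether $m\in\Zk$ or $m\in\Tk$. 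Once that bookkeeping is done, the optimization itself is a one-line Cauchy--Schwarz argument.
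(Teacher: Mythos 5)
Your proposal is correct and follows essentially the same route as the paper: the paper simply observes that $\SINR^\CS_{\ul,k}$ is a generalized Rayleigh quotient in $\ALPHA_k$ and invokes~\cite[Lemma B.10]{Emil:Book:2017}, whereas you unpack that lemma into its standard weighted Cauchy--Schwarz proof, exploiting the diagonality of $\qB_k^\CS$ to get the componentwise optimizer $\alpha_{mk}^\star\propto c_{mk}^\CS/\qB_{mk}^\CS$. Your extra remarks on scale-invariance with respect to the box constraint and on the degenerate jamming-mode entries (where $c_{mk}^\CS=\qB_{mk}^\CS=0$) are sound and in fact slightly more careful than the paper's one-line citation.
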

\begin{proof}
The proof follows from~\cite[Lemma B.10]{Emil:Book:2017} by noting that the observed SINR in~\eqref{eq:SINRMA} (the observed SINR in~\eqref{eq:SINRMA_ZF1}) can be written as a generalized Rayleigh quotient with respect to $\boldsymbol{\alpha}_k^\MR$ ($\boldsymbol{\alpha}_k^\PZF$) and thus be solved by a generalized eigenvalue decomposition.
\end{proof}

\begin{algorithm}[t]
\caption{Iterative Algorithm to Solve Problem~\eqref{P:SE4} }
\begin{algorithmic}[1]
\label{Alg:iterative}
    \STATE \textbf{Initialize}:  Set super iteration index $i = 0$, choose the initial value of MN-weighting  coefficient  $\ALPHA$. Define the maximum number of iterations $I$.
\REPEAT
    \STATE Compute  power allocation, $\boldsymbol \theta^\star$,  by solving~\eqref{eq:PAoptimization1} using bisection Algorithm~\ref{Alg:PA}. 
\STATE Set $\boldsymbol \theta=\boldsymbol \theta^\star$ and determine optimum MN-weighting  coefficients, $\ALPHA^\star=\{\ALPHA_1^\star, \ldots, \ALPHA_K^\star\}$, through solving the generalized eigenvalue Problem~\eqref{P:LSFD} as in Proposition~\ref{theorem:lsfd}. Set $\ALPHA_k ^\star=\frac{\ALPHA_k^\star}{\Vert\ALPHA_k^\star\Vert}$ $\forall k$.
\STATE Set $i=i+1$ and update $\ALPHA=\ALPHA^\star$.  
\UNTIL Required accuracy or $i = I$
\end{algorithmic}
\end{algorithm}

%
Therefore, by combining the two
sub-problems in~\eqref{eq:PAoptimization1} and~\eqref{P:LSFD}, we develop an iterative algorithm  by alternately solving each sub-problem at each iteration, as summarized in \textbf{Algorithm~\ref{Alg:iterative}}.

\subsection{Complexity and Convergence Analysis}
Here, we quantifying the computational complexity of solving the max-min MSP optimization problem~\eqref{P:SE3}, which involves the proposed greedy MN mode assignment Algorithm~\ref{alg:Grreedy} and the proposed iterative Algorithm~\ref{Alg:iterative} to solve the power allocation and MN-weighting coefficient optimization problem~\eqref{P:SE4}. 
It is easy to show that the complexity of calculating $\mathbb {E}\{X^\CS_k\}$ is on the order of $\mathcal{O}({M^2K})$. Therefore, the complexity of the proposed Algorithm~\ref{alg:Grreedy}  is up to $\frac{M(M+1)}{2}\mathcal{O}({M^2K})$.
Now, we analyze the computational complexity of  {Algorithm~\ref{Alg:iterative}}, which solves the max-min MSP power optimization problem~\eqref{eq:PAoptimization1} by using a bisection method along with solving a sequence of linear feasibility problems based on Algorithm~\ref{Alg:PA} and the generalized eigenvalue problem~\eqref{P:LSFD} at each iteration. 
The total number of  iterations required in Algorithm~\ref{Alg:PA}  is $\log_2(\frac{\zeta_{\max}-\zeta_{\min}}{\epsilon})$. Furthermore, the optimization problem~\eqref{eq:PAoptimization} involves $C_l\triangleq M(K+1)$ linear constraints and $C_v\triangleq MK$ real-valued scalar variables. Therefore,  solving the power allocation by Algorithm~\ref{Alg:PA}  requires a complexity of $\log_2(\frac{\zeta_{\max}-\zeta_{\min}}{\epsilon})\mathcal{O}\big(C_v^2\sqrt{C_l}(C_v+C_l)\big)$. In addition, for the MN-weighting  coefficient design in~\eqref{P:LSFD}, an eigenvalue solver imposes approximately $\mathcal{O}(KM^3)$ flops~\cite{Golub:book}.
    
The convergence of the objective function in the proposed iterative Algorithm~\ref{Alg:iterative} can be charachterized as follows. To solve problem~\eqref{P:SE4}, two sub-problems are alternately
solved so that at each iteration, one set of design parameters is obtained by solving the corresponding sub-problem, while fixing the other set of design variables.
More specifically, at each iteration, the power allocation coefficient set $\boldsymbol{\theta}^\star$ is calculated for the given MN-weighting  coefficient set $\ALPHA$ and then the MN-weighting  coefficient set $\ALPHA ^\star$ is calculated for the given $\boldsymbol{\theta}=\boldsymbol{\theta}^\star$. For the next iteration $\ALPHA$ is updated  as $\ALPHA=\ALPHA ^\star$.  The    power allocation $\boldsymbol{\theta}^\star$ obtained for a given $\ALPHA$ results an MSP greater than
or equal to that of the previous iteration. We also note that the power
allocation solution at each iteration $i$ is also a feasible solution in calculating the power allocation  in the next iteration $i+1$ due to the fact that the MN-weighting  coefficient  in iteration $i+1$ is derived for the  power allocation coefficient given by iteration $i$. Therefore, Algorithm~\ref{Alg:iterative} results in a monotonically increasing sequence of the objectives. 

\begin{algorithm}[!t]
\caption{UT Grouping}
\begin{algorithmic}[1]
\label{alg:UTgrouping} 
\STATE
\textbf{Initialize}: If $\Ntx \geq K + 1$ set $\Wm = \mathcal{K}, \Sm = \emptyset,$ and go to step 15, otherwise set iteration
index $i = 0$.
\STATE Calculate $\Pi^\star[i]= \underset{k\in\mathcal{K}} \min \,\,\mathbb {E}\{X^\CS_k (\Sm, \Wm)\}$ and $k^\star = \underset{k\in\mathcal{K}} {\arg\min} \,\,\mathbb {E}\{X^\CS_k (\Sm, \Wm)\}$
\REPEAT
\FORALL{$m \in \mathcal{M}$}
\STATE Set $\Wm = \Wm'$ and $\Sm = \Sm'$
\IF{$k^\star \notin \Wm$ and $|\Wm| \geq \Ntx-1$}
\STATE Set $\Wm' = \{\Wm \cup k^\star\} \setminus \bar{k}_m$ and $\Sm' = \{\Sm \cup \bar{k}_m\} \setminus k^\star$, where $\bar{k}_m$ is the index of UT associated with largest $\betamku$ in $\Wm$.
\ELSIF{$k^\star \notin \Wm$ and $|\Wm| < \Ntx $}  
\STATE $\Wm' = \Wm \cup k^\star$ and $\Sm' = \Sm \setminus k^\star$
\ENDIF
\ENDFOR
\STATE Calculate $\Pi ^\star[i + 1] = \min \mathbb {E}\{X^\CS_k (\Sm', \Wm')\}$ and $k^\star = \underset{k\in\mathcal{K}} {\arg \min}~\mathbb {E}\{X^\CS_k (\Sm', \Wm')\}$
\STATE Set $e = |\Pi^\star [i + 1] - \Pi^\star[i]|$ and $i=i+1$.
\UNTIL{ $e < e_{\min}$ }
\RETURN $\Wm$ and $\Sm$.
\end{algorithmic}
\end{algorithm}
\section{Numerical Results} \label{Sec:numerical}
In this section, numerical results are presented for studying the performance of the proposed CF-mMIMO surveillance system using the PZF and MR combining  schemes as well as for verifying the
benefit of our max-min MSP optimization framework. We firstly introduce our approach for UT grouping in the PZF combining scheme.
\subsection {UT Grouping}
When the number of antennas per MN is sufficiently large, full ZF combining offers excellent performance~\cite{Jiayi:TWC:2021}. Therefore, each MN in observing mode employs the ZF combining scheme for all untrusted links and we set $\Wm = \mathcal{K}$ and $\Sm = \emptyset$ when $\Ntx \geq K+1$. Otherwise,  in each iteration, we assign UT $k$ having minimum MSP to $\Wm$, $\forall m$, until there
is no more improvement in the minimum MSP among the untrusted links, as summarized in \textbf{Algorithm~\ref{alg:UTgrouping}}.

\subsection{Simulation Setup and Parameters}
We consider a CF-mMIMO surveillance system, where the MNs and UTs are randomly distributed in an area of  $D \times D$ km${}^2$ having wrapped around edges to reduce the boundary effects. Unless otherwise stated, the size of the network is $D=1$ km. Furthermore, each UR $k$ is randomly located in a circle with radius $150$ m around its corresponding transmitter, UT $k$. Moreover, we set the channel bandwidth to $B=50$ MHz and $\tau_\mathrm{t}=2K$. The maximum transmit power for training  pilot sequences, each MN, and each UT is $250$ mW, $1$ W, and $250$ mW, respectively, while the corresponding normalized maximum transmit powers  $\rho_\mathrm{t}$, ${\rho}_\dl$, and ${\rho}_\UT$ can be calculated upon dividing these powers by the noise power of $\sigma^2_n=-92$ dBm. The large-scale fading coefficient $\beta_{mk}$ is represented by 
\begin{equation} \label{eq:Beta}
\beta_{mk} =10^{\frac{\text{PL}_{mk}}{10}}  10^{\frac {\sigma _{sh} ~y_{mk}}{10}}, 
\end{equation}
where the first term models the path loss, and the second term models the shadow fading  with standard deviation $\sigma_{sh} = 4$ dB, and $y_{mk} \sim \mathcal{CN}(0, 1)$, respectively. Let us denote the distance
between the $m$-th MN and the $k$-th user by $d_{mk}$. Then,   $\text{PL}_{mk}$ (in dB) is calculated as \cite{Hien:cellfree} 
\begin{align}\label{eq:PL}
&\hspace {-1.8pc}\text {PL}_{mk}=\! \begin{cases} \!-L-35\log_{10}(d_{mk}),\qquad \qquad \quad & d_{mk} > d_{1},\!\! \\ \!-\!L-\!15\!\log _{10}(d_{1})\!-\!\!20 \log_{10}(d_{mk}),~\!&\! d_{0}\! < \!d_{mk}\!\!\le \!\! d_{1},\!\!\\ \!-L-15 \log_{10}(d_{1})-20 \log _{10}(d_{0}),&d_{mk} \le d_{0}, \end{cases} 
\end{align}
with $L = 46.3 + 33.9 \log_{10}(f)-13.82 \log_{10}(h_{\text{MN}} )-
(1.1 \log_{10}(f)- 0.7)h_{\text{U}} + (1.56 \log_{10}(f)- 0.8)$, where $f$
is the carrier frequency (in MHz), $h_{\text{MN}}$ and $h_{\text{U}}$ denote the MN antenna height (in m) and untrusted user height (in m), respectively. In all examples, we choose $d_0 = 10$ m, $d_1 = 50$ m,  $h_{\text{MN}} = 15$ m and $h_{\text{U}} = 1.65$ m. These parameters resemble those in \cite{Hien:cellfree}.  Similarly,  the large-scale fading coefficient $\beta_{\ell k}$ between  the $\ell$-th UT and  $k$-th UR can be modelled by a  change of indices in~\eqref{eq:Beta} and~\eqref{eq:PL}.

 \subsection{Performance Evaluation}
 \begin{figure}[t]
	\centering
	\includegraphics[width=0.45\textwidth]{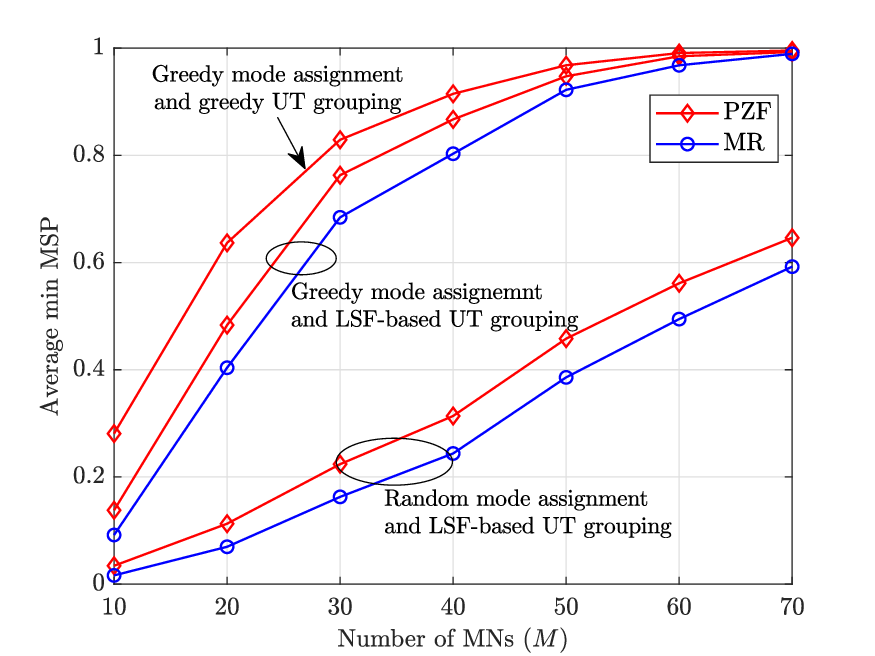}
	\caption{Average minimum MSP with mode assignment Algorithm~\ref{alg:Grreedy}  and UT grouping Algorithm~\ref{alg:UTgrouping}, where  $\Ntx=12$, and $K=20$.}
	\vspace{0em}
	\label{fig:Fig1}
\end{figure}

\emph{1) Performance of the Proposed Greedy Mode Assignment and Greedy UT Grouping}:
Here, we investigate the performance of the proposed greedy mode assignment in Algorithm~\ref{alg:Grreedy} and greedy UT grouping Algorithm~\ref{alg:UTgrouping} for PZF and MR combining schemes. We benchmark $1)$ random mode assignment, $2)$ UT grouping based on the value of large-scale fading coefficient $\betamku$ (LSF-based UT grouping),  so that when $\Ntx > K$ all UTs are assigned to $\Wm$ for ZF combining, $\forall m$, and when $\Ntx \leq K$ at MN $m$ a UT with the smallest value of $\betamku$ is assigned into the $\Wm$ group for ZF combining and the remaining  UTs are assigned into the $\Sm$ group for MR combining.
Figure~\ref{fig:Fig1} illustrates the minimum MSP  achieved by the  CF-mMIMO surveillance system for different number of MNs $M$. In this initial evaluation,
the setup consists in $D = 1$ km, $\Ntx = 12$, and $K = 20$.
 Our results verify the advantage of the proposed greedy mode assignment over random mode assignment.  More specifically, when $M=30$, greedy mode assignment provides performance gains of around $245\%$ and $325\%$ with respect to random mode assignment for the system relying on PZF  combining and MR combining, respectively. This remarkable performance gain  verifies the importance of an adequate mode selection in terms of monitoring performance in our  CF-mMIMO surveillance system. Additionally, compared to the LSF-based grouping scheme, our proposed UT grouping  provides up to an additional $100\%$  improvement in terms of MSP. This is reasonable  because,   PZF combining employing our proposed UT grouping can achieve an attractive  balance between mitigating the interference   and increasing the array gain. 
  In the next figures, we present results for the scenarios associated with greedy mode assignment and greedy UT grouping.

 \emph{2)  Performance of the Proposed Max-Min MSP}: Now, we examine the efficiency of proposed Max-Min MSP using power control and MN-weighting  coefficient control provided by Algorithm~\ref{Alg:iterative} for the PZF and MR combining  schemes. Our numerical results (not shown here) demonstrated that Algorithm~\ref{Alg:iterative} converges  quickly, and hence in what follows we set the maximum number of  iterations to $I = 2$ for Algorithm~\ref{Alg:iterative}.
Figure~\ref{fig:Fig2} presents the minimum MSP of the CF-mMIMO surveillance system for  different numbers of antennas per MN for systems having the
same total numbers of service antennas, i.e., $\Ntx_{\mathtt{tot}}=\Ntx M = 240$, but different
number of MNs.  We investigate three cases: \emph{case-$1$)} equal power allocation and equal MN-weighting  coefficient control, \emph{case-$2$)} proposed power control
but no optimal MN-weighting  coefficient ($\alpha_{mk} = 1$ and $\theta_{mk}$ is calculated from Algorithm~\ref{Alg:PA}), \emph{case-$3$)}  power control  and optimal MN-weighting 
coefficient control Algorithm~\ref{Alg:iterative}.  
The main observations that follow from these simulations are
as follows:
\begin{itemize}
\item The max-min MSP power control and MN-weighting  coefficient control enhance the system performance significantly for both the PZF and MR combining schemes. In particular, for the PZF combining scheme, compared to the \emph{case-$1$}, i.e., equal power control and equal MN-weighting  coefficient control, the power control   provides a performance gain of up to $35 \%$, while the power control together with the MN-weighting  coefficient control  can provide a performance gain of up to $43\%$. This highlights the advantage of our proposed solution, which becomes more pronounced  for the PZF combining scheme.

\begin{figure}[t]
	\centering
	\includegraphics[width=0.45\textwidth]{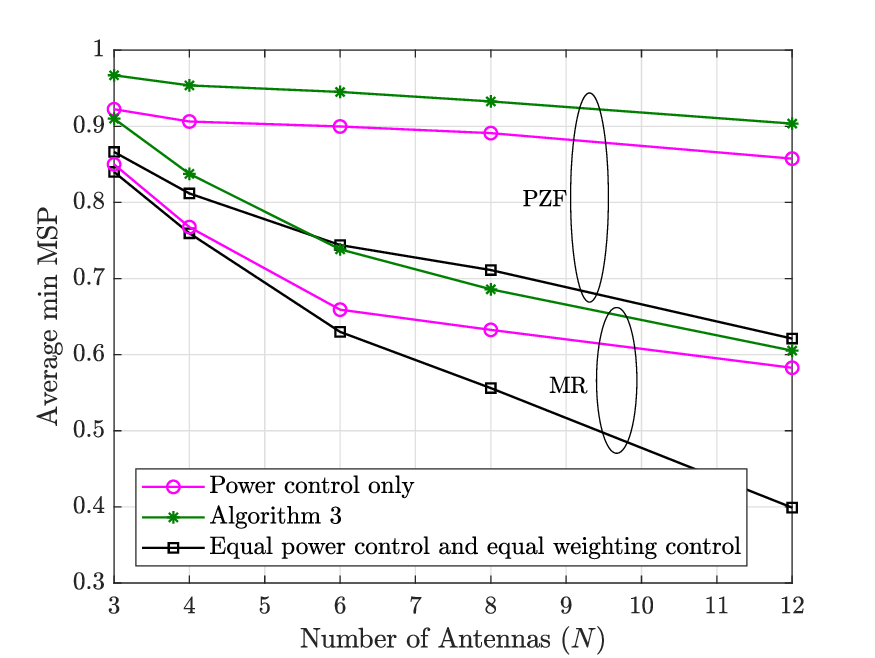}
	\caption{Average minimum MSP with max-min MSP optimization~\eqref{P:SE3}, where $K=20$ and $\Ntx_{\mathtt{tot}}=240$.}
	\vspace{0.5em}
	\label{fig:Fig2}
\end{figure}

\item The monitoring performance gap between the MR combining and the PZF combining
is quite significant. In particular, when $\Ntx=12$, applying PZF combining leads to $50\%$ improvement in terms of MSP with respect to the MR combining scheme. The reason is two-fold: Firstly, the ability of the PZF combining to cancel the cross-link interference; Secondly,   the proposed power control and MN-weighting  coefficient control along with the UT grouping scheme can  notably enhance the monitoring performance of weak UTs. We also note that, the performance gap between PZF and MR combining schemes increases upon increasing $\Ntx$. The intuitive reason is that  for a fixed total number of  antennas, when the number of antennas per MN increases, the number of MN reduces. For a low number of MNs, the cross-link  interference becomes dominant, which significantly degrades the overall performance of the system relying on MR combining.

\item When $\Ntx$ increases, the performance of the MR and PZF combining schemes deteriorates. This is due to the fact that increasing $\Ntx$  and accordingly decreasing $M$ has two effects on the
MSP, namely, (i) increases the
diversity and array gains (a positive effect), and (ii) reduces the macro-diversity gain and increases the path loss due to an increase in the relative distance
between the MNs and the untrusted pairs (a negative effect). The
latter effect becomes dominant, which leads to a degradation in the monitoring performance.
\end{itemize}
\begin{figure}[t]
	\centering
	\includegraphics[width=0.45\textwidth]{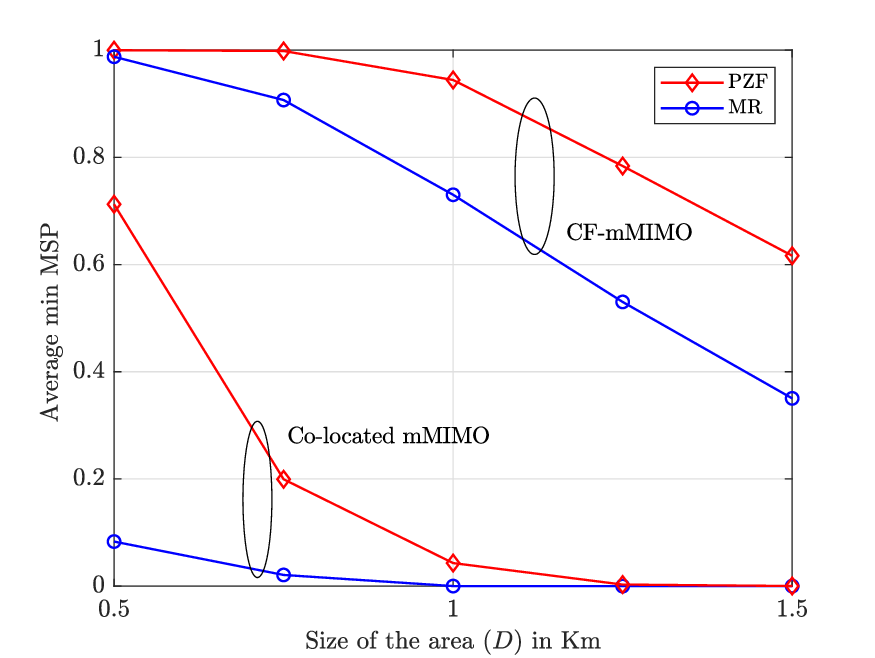}
	\vspace{-0.2em}
	\caption{Average minimum MSP  versus the size of the area, $D$, where $M=40$, $\Ntx=6$, and $K=20$.}
	\vspace{0.5em}
	\label{fig:Fig3}
\end{figure}
 \emph{3)  CF-mMIMO  versus Co-located Surveillance System}:
 Now, we compare the  MSP of the CF-mMIMO against that of a co-located FD massive MIMO system. The co-located FD massive MIMO surveillance system can be considered as a special case of the CF-mMIMO system,  where all $M$ MNs are co-located as an antenna array, which simultaneously performs observation and jamming  at the same frequency. 
 Therefore, the effective SINR of the untrusted link and the effective SINR for observing at the CPU can be obtained by setting   $\betamku=\beta_{ik}^\ul=\beta_k^\ul, \betamkd=\beta_{ik}^\dl=\beta_k^\dl$, $\gamumk=\gamma_{ik}^\ul=\gamma_k^\ul, \gamdmk=\gamma_{ik}^\dl=\gamma_k^\dl$,  $\beta_{mi}=\sigma^2_{\SI}, \forall m, i, k$ in Propositions~\ref{Theorem:SE:SUS},~\ref{Theorem:SE:CPU}, and~\ref{Theorem:SEZF:CPU}, respectively. Here, $\sigma^2_{\SI}$ reflects the strength of the residual self interference after employing self-interference suppression techniques~\cite{Riihonen:TSP:2011}. Recall that in our CF-mMIMO surveillance system all MNs operate in half-duplex mode, hence there is no self interference at each MN. 

For fair comparison with the CF-mMIMO system, the  co-located system deploys the same total number of antennas, i.e., $\Ncl=\frac{\Ntx_{\mathtt{tot}}}{2}=\frac{\Ntx M}{2}$ antennas are used for observing, while 
${\Ncl}$ antennas are used for jamming, which is termed as ``an antenna-preservation" condition~\cite{Himal:TWC:2014}. Accordingly, the effective SINR  of the untrusted link $k$ for FD co-located massive MIMO systems can be written as 
\begin{align}\label{eq:SINRSR_colocated}
    \SINR_{\UR,k}^\CL  (\boldsymbol \theta) = \frac{\rho_{\UT}|  h_{kk}|^2}{\xi_k(\boldsymbol{\theta})},
\end{align}
\vspace{-0.5em}
where
\vspace{-0.2em}
\begin{align}
\xi_k(\boldsymbol{\theta})
=& \rho_{\UT}\!\!\!\sum_{\ell\in \mathcal{K}\setminus k} \!\!\betakldu\!+{\rho_\dl}\Ncl
\sum_{k'\in\mathcal{K}}\!\!
 \theta_{k'}\beta_{k}^\dl\gamma_{k'}^\dl  
\!\!+\!{\rho_\dl}\Ncl^2\!{\theta_{k}}(\gamma_{k}^\dl)^2\!+\!1.\nonumber
 \end{align}
Additionally, the received SINR of the $k$-th untrusted link at the CPU for MR combining in our FD co-located massive MIMO system is given by
\vspace{-0.5em}
\begin{align}\label{eq:SINRMA_colocated}
&\SINR_{\ul,k}^{\CL,\MR}(\boldsymbol{\theta})=\frac{
	\Ncl  \rho_{\UT} (\gamma_k^\ul)^2
	}
	{	 \mu_{k}^{\CL,\MR}+
		\rho_\dl \Ncl
		\sum\nolimits_{\ell\in\mathcal{K}}
		\!
		\theta_{\ell}  \gamma_{\ell}^{\dl} \varrho_{k}^{\CL,\MR}
},
\end{align}
with 
$\mu_{k}^{\CL,\MR}= \rho_{\UT}
		 \sum\nolimits\nolimits_{\ell\in\mathcal{K}}\beta_{\ell}^{\ul}
		\gamma_{k}^{\ul}
	+\gamma_k^\ul$, and
$\varrho_{k}^{\CL,\MR}= \gamma_k^\ul \sigma^2_{\SI}$,
while the received SINR for the $k$-th untrusted link for full ZF combining and  $\Ncl \geq K+1$ is given by
\begin{align}\label{eq:SINRMA_ZF_colocated}
&\SINR_{\ul,k}^{\CL,\ZF}(\boldsymbol{\theta})=
\frac{
	\rho_{\UT} ({\gamma_{k}^\ul})^2
	}
	{\mu_k^{\CL,\ZF}+
		\rho_\dl\Ncl
		\sum\nolimits_{\ell\in\mathcal{K}}
		\theta_\ell  {\gamma_{\ell}^{\dl}}\varrho_{k}^{\CL,\ZF}
},
\end{align}
with 
$\mu_k^{\CL,\ZF}=\rho_{\UT}
		 \sum\nolimits_{\ell\in\mathcal{K}}\!
	\frac{\gamma_k^\ul(\beta_{\ell}^{\ul}-
		\gamma_\ell^\ul)}{\Ncl-K}
+		\frac{\gamma_{k}^\ul}{\Ncl-K}$ and
$ \varrho_{k}^{\CL,\ZF}= \frac{\gamma_{k}^\ul \sigma^2_{\SI}}{\Ncl-K}$. 


\begin{figure}[t]
	\centering
	\includegraphics[width=0.45\textwidth]{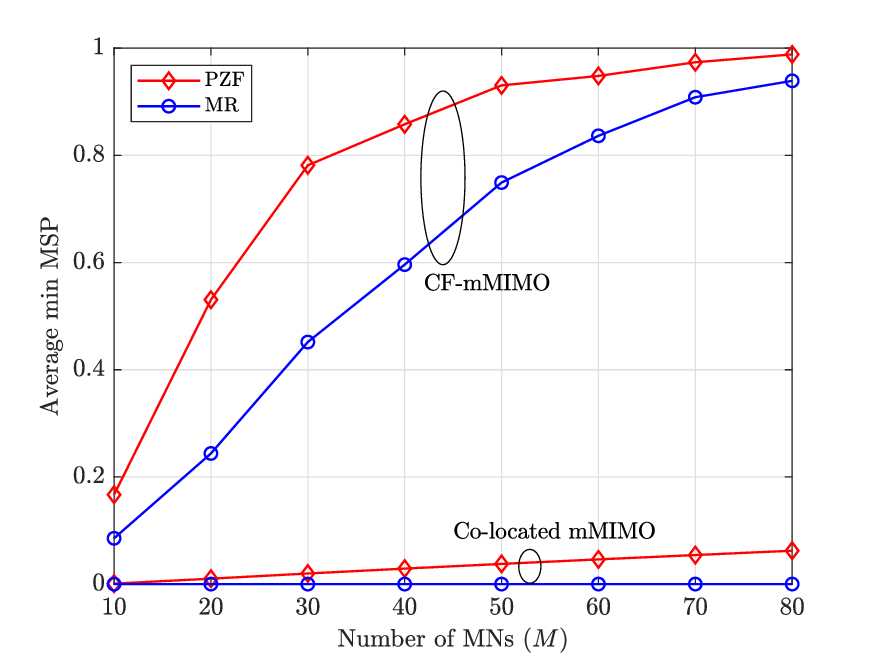}
	\vspace{-0.2em}
	\caption{Average minimum MSP  versus number of MNs, $M$, where  $\Ntx=4$, and $K=20$.}
	\vspace{0em}
	\label{fig:Fig4}
\end{figure}

For co-located massive MIMO systems, unless otherwise stated, we assume that the residual self interference after employing a self-interference suppression technique is $\sigma^2_{\SI} /\sigma_n^2=30$ dB\footnote{The strength of the  $\sigma^2_{\SI} /\sigma_n^2$ after employing employing self-interference suppression technique is typically in the range of $30$ dB to $100$ dB~\cite{Riihonen:TSP:2011}. Therefore, the performance of co-located massive MIMO with $\sigma^2_{\SI} /\sigma_n^2=30$ dB can be regarded as an upper bound.}. In addition, we  adopt a similar power control principle as in Algorithm~\ref{Alg:PA}.
Figure~\ref{fig:Fig3} shows the minimum MSP versus the size of the area, $D$.  It can be observed that the CF-mMIMO  surveillance system  significantly outperforms its co-located  massive MIMO counterpart. As expected, the relative  performance  gap between CF-mMIMO  and co-located surveillance systems dramatically escalates with the size of the area $D$. For example, for $D=0.75$ km, the CF-mMIMO system provides around $5$-fold improvement in the minimum MSP performance  over the co-located system, while the improvement reaches a $40$-fold value for  $D=1$ km. This highlights the effectiveness of our  optimized CF-mMIMO surveillance scheme for proactive monitoring systems.  The reason is   the capability of the  CF-mMIMO to surround each UT and each UR  by relying on MNs operating in observing and jamming mode, respectively. Additionally, in contrast to our CF-mMIMO with distributed MNs, a co-located massive MIMO  suffers from excessive self-interference due to the short distance between the transmit and receive antennas of a single large MN.

 \emph{4) Effect of the Number of MNs}:
 Figure~\ref{fig:Fig4} shows the minimum MSP versus the number of MNs, $M$. For our CF-mMIMO system using the MR and PZF combining schemes, when  the number of MNs increases, the macro-diversity gain increases, and hence the MSP enhances.  Upon  increasing $M$, for co-located massive MIMO  the number of transmit and receive antennas increases, which results in a higher array gain and monitoring performance.  In particular, we can see that the surveillance systems can benefit much more from  the higher macro-diversity   gain in a CF-mMIMO network, rather than from the higher array gain attained in co-located networks.

The results shown both in Fig.~\ref{fig:Fig3} as well as in Fig.~\ref{fig:Fig4} clearly suggest that having  a high degree of macro diversity and low path loss are crucial for offering a high MSP and corroborate that CF-mMIMO is well-suited for the surveillance  of  networks  in wide areas.
%

\begin{figure}[t]
	\centering
	\includegraphics[width=0.45\textwidth]{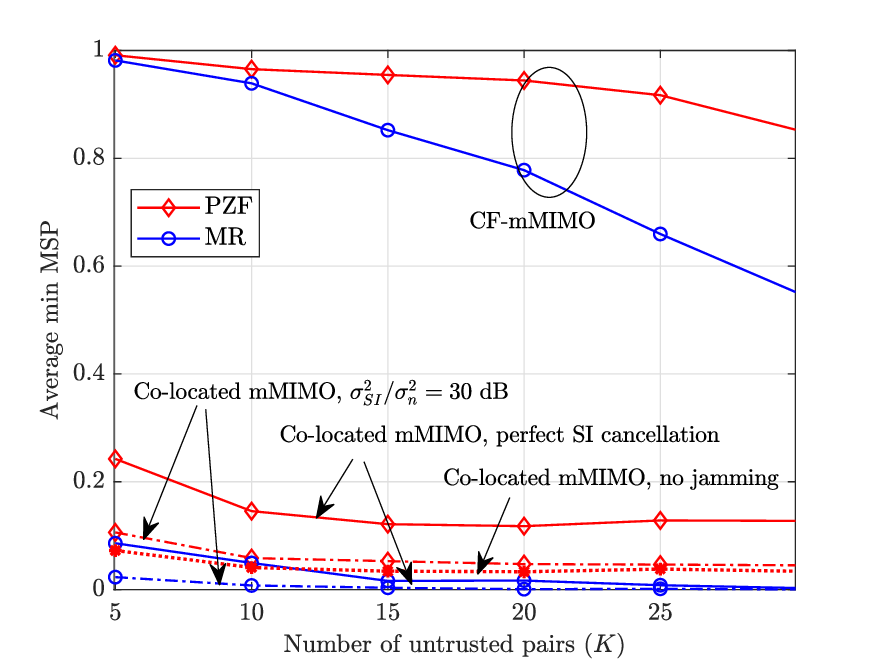}
	\vspace{-0.2em}
	\caption{Average minimum MSP  versus number of untrusted links, $K$, where $M=40$ and $\Ntx=6$.}
	\vspace{0em}
	\label{fig:Fig5}
\end{figure}
  
  \emph{5) Effect of the Number of Untrusted Links}:  
Next, in Fig.~\ref{fig:Fig5}, we investigate the impact of the number of untrusted links on the MSP performance of both CF-mMIMO and of co-located massive MIMO systems. Herein, we also consider co-located massive MIMO having perfect SI cancellation and half-duplex  co-located massive MIMO with no jamming. We observe that upon increasing $K$, the monitoring performance of all schemes deteriorates. Nevertheless, the CF-mMIMO system using the PZF combining scheme  still yields excellent  MSP compared to the other schemes. We also note that for  small values of $K$, MR combining provides a better
performance/implementation complexity trade-off compared to its PZF counterpart. However, increasing the number of untrusted pairs results in stronger cross-link interference. Therefore, PZF combining having the ability to cancel  the cross-link interference is  undoubtedly a better choice.   Interestingly, we can observe that even under the idealized assumption of having perfect SI cancellation in co-located massive MIMO, CF-mMIMO surveillance still significantly outperforms  the co-located massive MIMO. This result shows that the proposed CF-mMIMO surveillance system, relying on the proposed power control, MN-weighting  coefficient control, and suitable mode assignment,  yields an  impressive monitoring performance in   multiple untrusted pair scenarios.

\section{Conclusions}
We have  developed a CF-mMIMO surveillance system for monitoring  multiple distributed untrusted pairs and analyzed the  performance of both MR and PZF combining schemes. We proposed a new long-term-based  optimization technique of designing the MN mode assignment, power control for the MNs that are in jamming mode, and MN-weighting 
coefficients to maximize the min MSP across all untrusted pairs  under practical transmit power constraints.
We showed that our  CF-mMIMO surveillance systems  provide significant  monitoring gains over conventional co-located massive MIMO, even for relatively small number of MNs. In particular, the minimum MSP of CF-mMIMO surveillance is an order of magnitude higher than that of the co-located massive MIMO system, when the untrusted pairs are spread out over a large area. The results also show that with different network setups, PZF combining provides the highest MSP, while for small values of the number of untrusted pairs and the size of the area,  MR combining constitutes a beneficial choice.

\vspace{-0.5em}
\appendices
\section{Proof of Proposition~\ref{Theorem:SE:SUS}}~\label{ProofTheorem:SE:SUS}
To derive a closed-form expression for  the effective SINR  of the untrusted link, we have to calculate $\xi_k(\qa,\boldsymbol{\theta})=\mathbb{E}\Big\{\big|\tilde w_{k}^{\UR}\big|^2\Big\}$. Let us denote   the jamming channel
estimation error by $\boldsymbol{\varepsilon}_{mk}^{\mathrm{J}}=\gmkd-\hgmkd$. Therefore, we have
 \begin{align}\label{eq:SINRD1}
&\xi_k(\qa,\boldsymbol{\theta})=\mathbb{E}\Big\{\big|\tilde w_{k}^{\UR}\big|^2\Big\}=
 \rho_{\UT}\sum_{\ell\in \mathcal{K}\setminus k} \betakldu+{\rho_\dl} \mathcal{I}+1, 
  \end{align}
  where
  \vspace{-0.4em}
 \begin{align*}
 \mathcal{I}\triangleq & \sum_{k'\in\mathcal{K} } \mathbb{E} \Big\{\Big|
\sum_{m \in \mathcal{M}}
 a_m\sqrt{\theta_{mk'}}
(\gmkd)^T\left(\hgmkpd\right)^*
\Big|^2\Big\}.
  \end{align*} 
To calculate $\mathcal{I}$, owing to the fact that the variance of a sum of
independent RVs is equal to the sum of the variances, we have
 \begin{align}\label{eq:SINRD2}
 &\mathcal{I}\stackrel{(a)}{=}\!
\sum_{k'\in\mathcal{K}\setminus k} \!\!\mathbb{E} \Big\{\big| \!
 \sum_{m \in \mathcal{M}}\!
\! a_m\sqrt{\theta_{mk'}}\left(\gmkd\right)^T\left(\hgmkpd\right)^*\big|^2\Big\}
\nonumber\\
 &\qquad
 + \!\mathbb{E}\Big\{
\Big|\sum_{m \in \mathcal{M}}\!\!
a_m \sqrt{\theta_{mk}}
\left(\boldsymbol{\varepsilon}_{mk}^{\mathrm{J}}\!+\!\hgmkd\right)^T\left(\hgmkd\right)^*\Big|^2\Big\}\nonumber
 \\
&\!\!\stackrel{(b)}{=}\!\!\! 
\sum_{k'\in\mathcal{K}\setminus k} 
 \sum_{m \in \mathcal{M}}\!\!
\!a_m {\theta_{mk'}}\mathbb{E}\! \Big\{\!\left(\gmkd\right)^T\!\!\mathbb{E} \Big\{\!\left(\hgmkpd\right)^*\!\!\left(\hgmkpd\right)^T\!\!\Big\}\left(\gmkd\right)^*\!\!\Big\}\nonumber
 \\
&\hspace{2em}+\sum_{m \in \mathcal{M}}a_m\theta_{mk}\left(\mathbb{E} \Big\{\big\Vert\hgmkd\big\Vert^4\Big\}+\mathbb{E} \Big\{\big|{(\boldsymbol{\varepsilon}}_{mk}^{\mathrm{J}})^T(\hgmkd)^*\big|^2\!\Big\}\!\right)\nonumber\\
&\hspace{2em}+\sum_{m \in \mathcal{M}}\sum_{n \in \mathcal{M}\setminus m}a_m a_n\sqrt{\theta_{mk}\theta_{nk}} \mathbb{E} \Big\{\Vert\hgmkd\Vert^2\!\Big\}
\mathbb{E} \Big\{\Vert\hgnkd\Vert^2\!\Big\}\nonumber
 \\
&\stackrel{(c)}{=} 
\Ntx\!\!
 \sum_{k'\in\mathcal{K} \setminus k}\sum_{m \in \mathcal{M}}
 a_m\theta_{mk'}\betamkd\gamdmkp  \nonumber
 + \!\Ntx\!\!\sum_{m \in \mathcal{M}}\!\!a_m\theta_{mk}\gamdmk(\Ntx \times
\nonumber
\\
&\qquad\gamdmk+\betamkd)+\Ntx^2\sum_{m \in \mathcal{M}}\sum_{n \in \mathcal{M}\setminus m}a_m a_n\sqrt{\theta_{mk}\theta_{nk}} \gamdmk \gamdnk\nonumber\\
&=\!
\Ntx\!\!
 \sum_{k'\in\mathcal{K}}\sum_{m \in \mathcal{M}}
 a_m\theta_{mk'}\betamkd\gamdmkp  
 +\Ntx^2\Big(\!\sum_{m \in \mathcal{M}}\!\!a_m \sqrt{\theta_{mk}} \gamdmk\! \Big)^2\!\!,
  \end{align}
where $(a)$  follows from the fact that $\hgmkpd$ has zero mean and it is independent of $\gmkd$ and $\hgmkd$, $(b)$ follows from the fact that $\boldsymbol{\varepsilon}_{mk}^{\mathrm{J}}$   is independent of $\hgmkpd$ and it is a zero-mean RV
 and $(c)$ follows from  the fact that $\mathbb{E} \big\{\Vert\hgmkd\Vert^4\big\}=\Ntx(\Ntx+1)(\gamdmk)^2$.
Substituting~\eqref{eq:SINRD2} into~\eqref{eq:SINRD1} completes the proof.
\section{Proof of Proposition~\ref{Theorem:SE:CPU}}~\label{ProofTheorem:SE:CPU}
Let us denote   the observing channel estimation error, which is independent of $\hat \qg_{mk }^\ul$ and  zero-mean RV, by $\boldsymbol{\varepsilon}_{mk}^\ul=\qg_{mk }^\ul- \hat \qg_{mk }^\ul$. According to $\vmkm=\hgmku$ and by exploiting the independence between the channel
estimation errors and the estimates, we have
\begin{align} \label{eq:DS}
\mathbb{E}\big\{\text{DS}_{k} ^\MR\} &\!=\! \sqrt {{\rho _\UT}}  \sum\limits_{m \in {\mathcal{M}}} \!\!\alphmk(1\!-\!a_m)  \mathbb{E}\left\{{(\hgmku)^\dag}(\hgmku \!+ \!{\boldsymbol{\varepsilon}}_{mk }^\ul)\right\}\nonumber\\ 
& = \sqrt {{\rho _\UT}} \sum\limits_{m \in {\mathcal{M}}} \alphmk (1-a_m) \Nrx\gamumk.
\end{align}
Additionally,  $\mathbb{E}\big\{\vert\text{BU}_k ^\MR \vert^2\big\}$ can be written as
\begin{align} \label{eq:BI}
& \mathbb{E}\big\{\vert\text{BU}_k ^\MR \vert^2\big\}
\nonumber
\\
&
= \!{\rho_\UT}\sum\limits_{m \in {\mathcal{M}}} \alphmk^2(1\!-\!a_m)  \mathbb{E}\Big\{ |{(\hgmku)^\dag}\gmku \!-\! \mathbb{E}\{ {(\hgmku)^\dag}\gmku\} {|^2}\!\Big\} \nonumber
\\ 
 &= \!\rho_\UT\!\!\!\sum\limits_{m \in {\mathcal{M}}}
 \alphmk^2(1-a_m)  \Big(\!{ \mathbb{E}}\Big\{ \big|{(\hgmku)^\dag}\boldsymbol{\varepsilon}_{mk}^\ul \!+\! \Vert\hgmku{\Vert^2}{\big|^2}\!\Big\} \nonumber
 \\
&\hspace{10em}
\!-\! \Nrx^2(\gamumk)^2\!\Big)\nonumber
\\
& \stackrel{(a)}{=} \rho _\UT \sum\limits_{m \in {\mathcal{M}}}\alphmk^2(1-a_m)  \big(\mathbb{E}\big\{ |{(\hgmku)^\dag}\boldsymbol{\varepsilon}_{mk}^\ul{|^2}\big\}   \nonumber
\\
&\hspace{10em}
+\mathbb{E}\big\{ \Vert\hgmku{\Vert^4}\big\}- \Nrx^2(\gamumk)^2\big) \nonumber
\\ 
&\stackrel{(b)}{=} \rho _\UT \sum\limits_{m \in {\mathcal{M}}}\alphmk^2 (1-a_m)  \Nrx\beta _{mk }^\ul\gamumk, 
\end{align}
where we have exploited that: in $(a) $ $\boldsymbol{\varepsilon}_{mk}^\ul$  is independent of $\hat \qg_{mk }^\ul$ and a zero-mean RV; in $(b)$ $\mathbb{E} \big\{\Vert\hgmku\Vert^4\big\}=\Ntx(\Ntx+1)(\gamumk)^2$.

By exploiting the fact that $\hgmku$ is independent of $\gmlu$ for $k \neq \ell$, while $\hgmku$,  ${\qF_{mi}}$, and ${\mathbf{\hat g}}_{i\ell}^\dl$ are independent, we can formulate  $\mathbb{E}\{\vert {\text{UI}}_{\ell k} ^\MR\vert^2\}$ and $\mathbb{E}\{\vert\text{MI}_{\ell k} ^\MR\vert^2\}$, respectively, as
\vspace{0em}
\begin{align} \label{eq:SI}
&\mathbb{E}\big\{\vert {\text{UI}}_{\ell k} ^\MR\vert^2\big\}= {\rho _\UT} {\sum\limits_{m \in {\mathcal{M}}}  }\alphmk^2(1-a_m) \Ntx\gamumk\beta _{m\ell}^\ul,\\ 
&\mathbb{E}\big\{\vert\text{MI}_{\ell k} ^\MR\vert^2\big\} \nonumber
\\
&={\rho _\dl} \sum\limits_{m \in {\mathcal{M}}} \sum\limits_{i \in {\mathcal{M}}}\alphmk^2 (1-a_m) a_{i}\theta_{i\ell} 
\mathbb{E}   \{ |{({\mathbf{\hat g}}_{mk }^\ul)^H}{{\mathbf{F}}_{mi}}{({\mathbf{\hat g}}_{i\ell}^\dl)^{\ast}}{|^2}\} \nonumber
\\
&= {\rho _\dl} {\sum\limits_{m \in {\mathcal{M}}} {\sum\limits_{i \in {\mathcal{M}}}  } } \alphmk^2\Ntx^2		(1-a_m) a_{i}\theta_{i\ell}   \gamumk \beta_{mi} \gamma_{i\ell}^{\dl}\label{eq:RI}.
\end{align}
 The substitution of~\eqref{eq:DS},~\eqref{eq:BI},~\eqref{eq:SI}, and~\eqref{eq:RI} into~\eqref{eq:SINRST} and inserting  $\mathbb{E}\big\{ \vert\text{AN}_k ^\MR\vert^2\big\}=\! \Nrx\sum\nolimits_{m \in {\mathcal{M}}}\alphmk^2(1-a_m) \gamumk$, yields~\eqref{eq:SINRMA}.
\section{Proof of Proposition~\ref{Theorem:SEZF:CPU}}~\label{ProofTheorem:SEZF:CPU}
According to~\eqref{eq:ZF_prec2} and due to the  fact that ${\boldsymbol{\varepsilon}}_{mk }^\ul$ has zero mean and is
independent of $\hgmku$, $ \mathbb{E}\{\text{DS}_k ^\PZF\}$ in the numerator of~\eqref{eq:SINRST} can be calculated as
\begin{align} \label{eq:DS_zf}
& \mathbb{E}\{\text{DS}_k ^\PZF\} \!=\! \! \sqrt {{\rho _\UT}} \Big(\!\!\sum\limits_{m \in \Zk} \!\!\alphmk(1\!\!-\!a_m)  \mathbb{E}\left\{{(\vmkz)^\dag}(\hgmku \!+\!{\boldsymbol{\varepsilon}}_{mk }^\ul)\right\} \nonumber\\
 &\hspace{3em}+\sum\limits_{m \in \Tk} \alphmk(1-a_m)  \mathbb{E}\left\{{(\vmkm)^\dag}(\hgmku + {\boldsymbol{\varepsilon}}_{mk }^\ul)\right\}\Big)\nonumber\\ 
& = \sqrt {{\rho _\UT}}\Big(\sum\limits_{m \in \Zk} \alphmk  (1-a_m) \gamumk \nonumber\\
&\hspace{3em}+ \Ntx\sum\limits_{m \in\Tk} \alphmk  (1-a_m) \gamumk\Big).
\end{align}
Then,  $\mathbb{E}\big\{\vert{\text{BU}}_k ^\PZF\vert^2\big\}$ can be written as
\begin{align} \label{eq:BI_zf1}
& \mathbb{E}\big\{\vert\text{BU}_k ^\PZF \vert^2\big\}
= {\rho_\UT} \mathbb{E}\Big\{ \big \vert\sum\limits_{m \in \Zk} \alphmk(1\!-\!a_m)  {(\vmkz)^\dag}\gmku+\nonumber
\\
&\sum\limits_{m \in \Tk} \alphmk(1\!-\!a_m)  {(\vmkm)^\dag}\gmku\big|^2\Big\} 
- {\rho_\UT} \big \vert \mathbb{E} \big\{ \sum\limits_{m \in\Zk} \alphmk\nonumber\\
&(1\!-\!a_m)  {(\vmkz)^\dag}\gmku+\sum\limits_{m \in\Tk} \alphmk(1\!-\!a_m)  {(\vmkz)^\dag}\gmku\big\}\big|^2 \nonumber
\\ 
 &= \rho_\UT\mathcal{I}_2- |\mathbb{E}\big\{\text{DS}_k ^\PZF\}|^2,
\end{align}
where
\vspace{-0.5em}
\begin{align}
\mathcal{I}_2 =& \mathbb{E}\Big\{ \big \vert\sum\nolimits_{m \in \Zk} \alphmk(1\!-\!a_m)  {(\vmkz)^\dag}\gmku+\nonumber
\\
&\sum\nolimits_{m \in \Tk} \alphmk(1\!-\!a_m)  {(\vmkm)^\dag}\gmku\big|^2\Big\}.
\end{align}
By applying 
\begin{align}
\mathbb{E}\Big \{ \big\Vert\vmkz\big\Vert^2\Big\}&=(\gamumk)^2\mathbb{E}\Big \{ \big\Vert\hGm\big((\hGm)^\dag\hGm\big)^{-1}\qe_k\big\Vert^2\Big\}\nonumber
\\
&=\frac{\gamumk}{\Ntx-|\Wm|},
\end{align}
which follows from~\cite[Lemma 2.10]{tulino04}, we have
\begin{align} \label{eq:BI_zf2}
& \mathcal{I}_2
\stackrel{(a)}{=} \big(\sum_{m \in \Zk}\alphmk(1-a_m)\gamumk\big)^2\!+\!\sum_{m\in\Zk}\alphmk^2(1-a_m) \times\nonumber
\\    
&\frac{\gamumk(\betamku\!\!-\!\gamumk)}{\Ntx-|\Wm|}+\mathbb{E}\Big\{\big\vert\!\!
 \sum\limits_{m \in \Tk} \alphmk(1\!-\!a_m)  (\vmkm)^\dag\gmku\big\}\big\vert^2\Big\}\nonumber
 \\
 &+2\big(\!\sum_{m \in \Zk}\alphmk(1-a_m)\gamumk\Big)\Big(\Ntx\sum_{m \in \Tk}\!\!\alphmk(1-a_m)\gamumk\big).
\end{align}
Also, the third term of $\mathcal{I}_2$ can be calculated as
\begin{align} \label{eq:BI_zf2_T2}
&\mathbb{E}\Big\{\big\vert
 \sum\nolimits_{m \in \Tk} \alphmk(1\!-\!a_m)  (\vmkm)^\dag\gmku\big\}\big\vert^2\Big\}\nonumber\\
&~~=
 \sum\nolimits_{m \in \Tk} \alphmk^2(1\!-\!a_m)  \mathbb{E}\big\{\vert(\vmkm)^\dag\gmku\vert^2\big\}\nonumber
 \\
&\qquad + \Big\vert
 \sum\nolimits_{m \in \Tk} \alphmk(1\!-\!a_m)  \mathbb{E}\big\{(\vmkm)^\dag\gmku\big\}\Big\vert^2\nonumber
 \\
 &\qquad-
  \sum\nolimits_{m \in \Tk}(1\!-\!a_m)  \big\vert
 \alphmk \mathbb{E}\big\{(\vmkm)^\dag\gmku\big\}\big\vert^2\nonumber\\
 &~~ =\sum\nolimits_{m \in \Tk} \alphmk^2(1\!-\!a_m)\Ntx \betamku \gamumk \nonumber\\
 &\qquad+  \Big(\sum\nolimits_{m \in \Tk} \alphmk(1\!-\!a_m)\Ntx\gamumk\Big)^2.
\end{align}
Substituting~\eqref{eq:BI_zf2_T2} into~\eqref{eq:BI_zf2} and then~\eqref{eq:BI_zf2} and~\eqref{eq:DS_zf} into~\eqref{eq:BI_zf1} yields
\vspace{-0.5em}
\begin{align} \label{eq:BI_zf}
\mathbb{E}\big\{\vert \text{BU}_k ^\PZF\vert^2\}=&\rho_\UT\sum_{m\in\Zk}\alphmk^2(1-a_m) \frac{\gamumk(\betamku-\gamumk)}{\Ntx-|\Wm|}+\nonumber
 \\
 &\rho_\UT\sum\limits_{m \in \Tk} \alphmk^2(1\!-\!a_m)\Ntx \betamku \gamumk. 
\end{align}
Similarly, we compute  $\text{UI}_{\ell k} ^\PZF$ as
\vspace{-0.5em}
\begin{align}\label{eq:UI_zf}
\mathbb{E}\big\{\vert \text{UI}_{\ell k} ^\PZF\vert^2\}&=  \rho_\UT\sum_{m\in\Zk}\alphmk^2(1-a_m) \frac{\gamumk(\betamlu-\gamuml)}{\Ntx-|\Wm|}\nonumber\\
 &+\rho_\UT\sum\limits_{m \in \Tk} \alphmk^2(1\!-\!a_m)\Ntx \betamlu \gamumk. 
\end{align}
It can be shown that  ${(\vmkz)^\dag}{\qF_{mi}}{({\mathbf{\hat g}}_{i\ell}^\dl)^{\ast}}$ is a zero-mean RV with variance $\frac{\Ntx\gamumk\beta_{mi}\gamma_{i\ell}^\dl}{\Ntx-|\Wm|}$. 
Moreover,  ${(\vmkm)^\dag}{\qF_{mi}}{({\mathbf{\hat g}}_{i\ell}^\dl)^{\ast}}$ is a zero-mean RV with variance ${\Ntx^2\gamumk\beta_{mi}\gamma_{i\ell}^\dl}$. Therefore,  we can formulate   $\mathbb{E}\big\{\vert\text{MI}_{\ell k} ^\PZF\vert^2\}$ as
\vspace{-0.5em}
\begin{align}\label{eq:MI_zf}
&\mathbb{E}\big\{\vert\text{MI}_{\ell k} ^\PZF\vert^2\}  \nonumber\\
&= {\rho _\dl} \sum\limits_{i \in {\mathcal{M}}}a_{i}\theta_{i\ell} \Big(\sum\limits_{m \in \Zk} \alphmk^2 (1-a_m)  \mathbb{E}   \{ |{(\vmkz)^\dag}{{\mathbf{F}}_{mi}}{({\mathbf{\hat g}}_{i\ell}^\dl)^{\ast}}{|^2}\} \nonumber
\\
&\qquad+\sum\limits_{m \in \Tk} \alphmk^2 (1-a_m)  \mathbb{E}   \{ |{(\vmkm)^\dag}{{\mathbf{F}}_{mi}}{({\mathbf{\hat g}}_{i\ell}^\dl)^{\ast}}{|^2}\}\Big)\nonumber\\
&={\rho _\dl} \Ntx  {\sum\limits_{i \in {\mathcal{M}}}}
a_i{\theta_{i\ell}}\gamma_{i\ell}^\dl  \big(\sum\limits_{\substack{m\in\Zk}}
		\alphmk^2(1-a_m) \frac{\gamumk \beta_{mi}}{\Ntx-|\Wm|}\nonumber
  \\
  &\qquad+\Ntx
		\!\sum\limits_{\substack{m\in\Tk}}
		\alphmk^2(1-a_m)  {\gamumk \beta_{mi}}\big).
\end{align}
Finally, by exploiting the fact that the noise and the channel estimate are
independent, $\mathbb{E}\big\{\vert\text{AN}_k ^\PZF\vert^2\}$ can be written as 
\begin{align}\label{eq:AN_zf}
&\mathbb{E}\big\{\vert\text{AN}_k ^\PZF\vert^2\}= \mathbb{E}\Bigg\{ \Big \vert\sum\limits_{m \in \Zk} \alphmk(1\!-\!a_m)  {(\vmkz)^\dag}\qw_m^\ul\Big|^2\Bigg\}\nonumber
\\
&\qquad+\mathbb{E}\Bigg\{ \Big \vert\sum\limits_{m \in \Tk} \alphmk(1\!-\!a_m)  {(\vmkz)^\dag}\qw_m^\ul\Big|^2\Bigg\} \nonumber\\
&=	\sum\limits_{\substack{m\in\Zk}}\!
		\alphmk^2(1\!\!-\!a_m)\frac{\gamumk}{\Ntx-|\Wm|}+\Ntx\!\sum\limits_{\substack{m\in\Tk}}\!\!
		\alphmk^2(1\!\!-\!a_m)\gamumk.
\end{align}
 The substitution of~\eqref{eq:DS_zf},~\eqref{eq:BI_zf},~\eqref{eq:UI_zf},~\eqref{eq:MI_zf}, and~\eqref{eq:AN_zf} into~\eqref{eq:SINRST}  yields~\eqref{eq:SINRMA_ZF1}.

\balance
\bibliographystyle{IEEEtran}
\bibliography{IEEEabrv,Journal_main}

\begin{IEEEbiography}[{\includegraphics[width=1in,height=1.25in,clip,keepaspectratio]{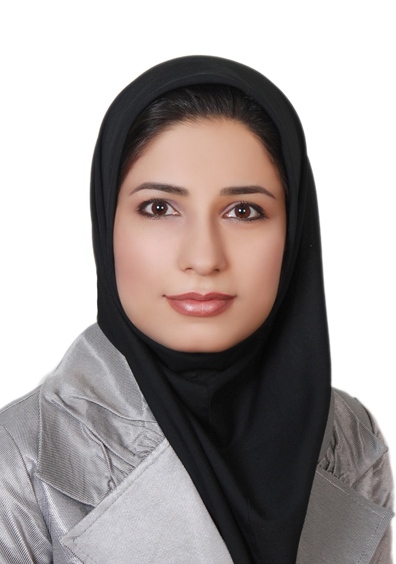}}]{Zahra Mobini}
 received the B.S. degree in electrical engineering from Isfahan University of Technology, Isfahan, Iran, in 2006, and the M.S and Ph.D.
degrees, both in electrical engineering, from the M. A. University of Technology and K. N. Toosi University of Technology, Tehran, Iran, respectively. From November 2010 to November 2011, she was a Visiting Researcher at the Research School of Engineering, Australian National University, Canberra, ACT, Australia. She is currently  a Post-Doctoral Research Fellow at the Centre for Wireless Innovation (CWI), Queen's University Belfast (QUB). Before joining QUB,  she was an Assistant and then Associate Professor with the Faculty of Engineering, Shahrekord University, Shahrekord, Iran (2015-2021). 
Her research interests include physical-layer security, massive  MIMO, cell-free massive  MIMO, full-duplex communications, and resource management and optimization. She has co-authored many research papers in wireless communications. She has actively served as the reviewer for a variety of IEEE journals,  such as TWC, TCOM, and TVT.

\end{IEEEbiography}

\begin{IEEEbiography}[{\includegraphics[width=1in,height=1.25in,clip,keepaspectratio]{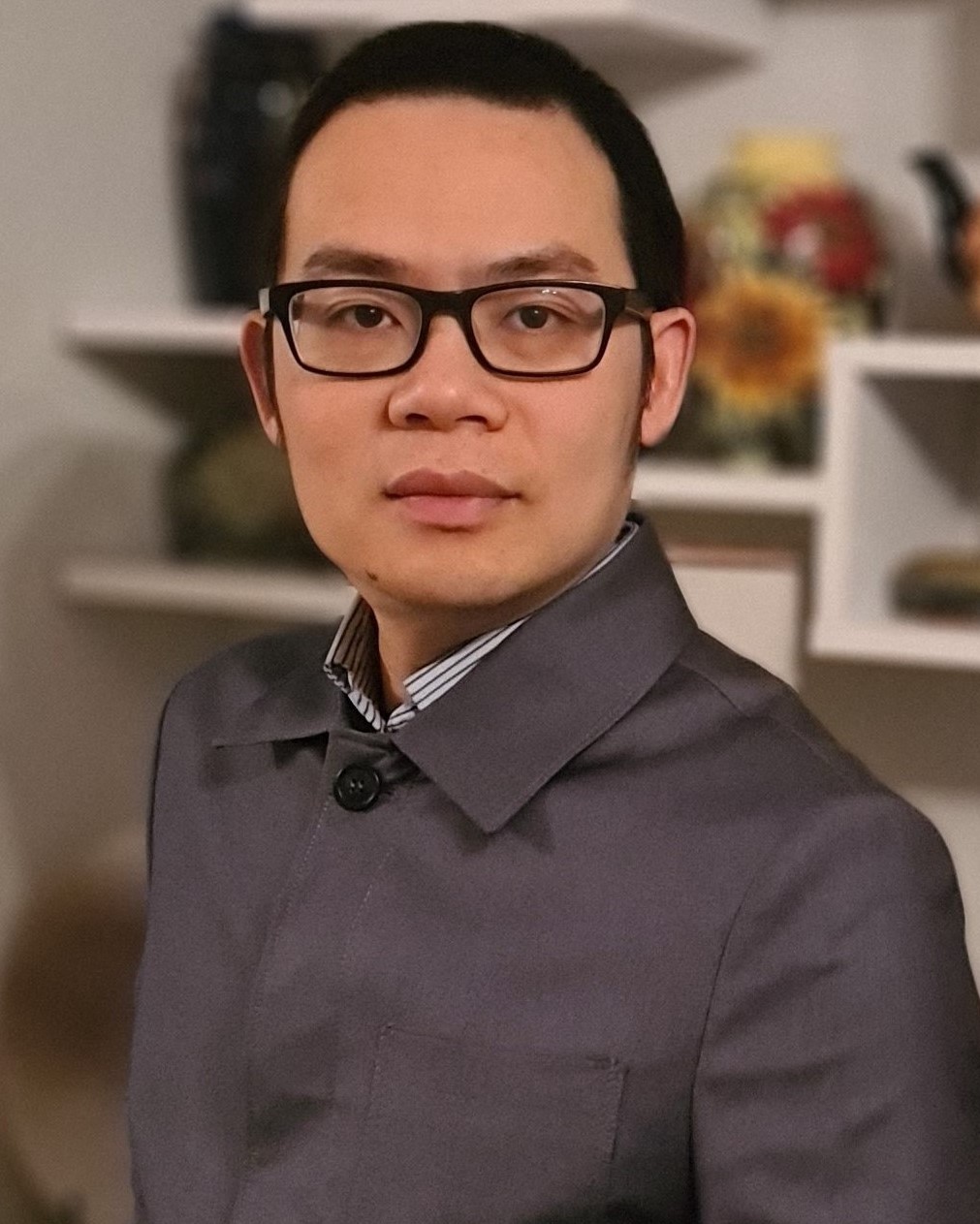}}]
{Hien Quoc Ngo} is currently a Reader with Queen's University Belfast, U.K. His main research interests include massive MIMO systems, cell-free massive MIMO, reconfigurable intelligent surfaces, physical layer security, and cooperative communications. He has co-authored many research papers in wireless communications and co-authored the Cambridge University Press textbook \emph{Fundamentals of Massive MIMO} (2016).

He received the IEEE ComSoc Stephen O. Rice Prize in 2015, the IEEE ComSoc Leonard G. Abraham Prize in 2017, the Best Ph.D. Award from EURASIP in 2018, and the IEEE CTTC Early Achievement Award in 2023. He also received the IEEE Sweden VT-COM-IT Joint Chapter Best Student Journal Paper Award in 2015. He was awarded the UKRI Future Leaders Fellowship in 2019. He serves as the Editor for the IEEE Transactions on Wireless Communications, IEEE Transactions on Communications, the Digital Signal Processing, and the Physical Communication (Elsevier). He was a Guest Editor of IET Communications, and a Guest Editor of IEEE ACCESS in 2017.
\end{IEEEbiography}

\begin{IEEEbiography}[{\includegraphics[width=1in,height=1.25in,clip,keepaspectratio]{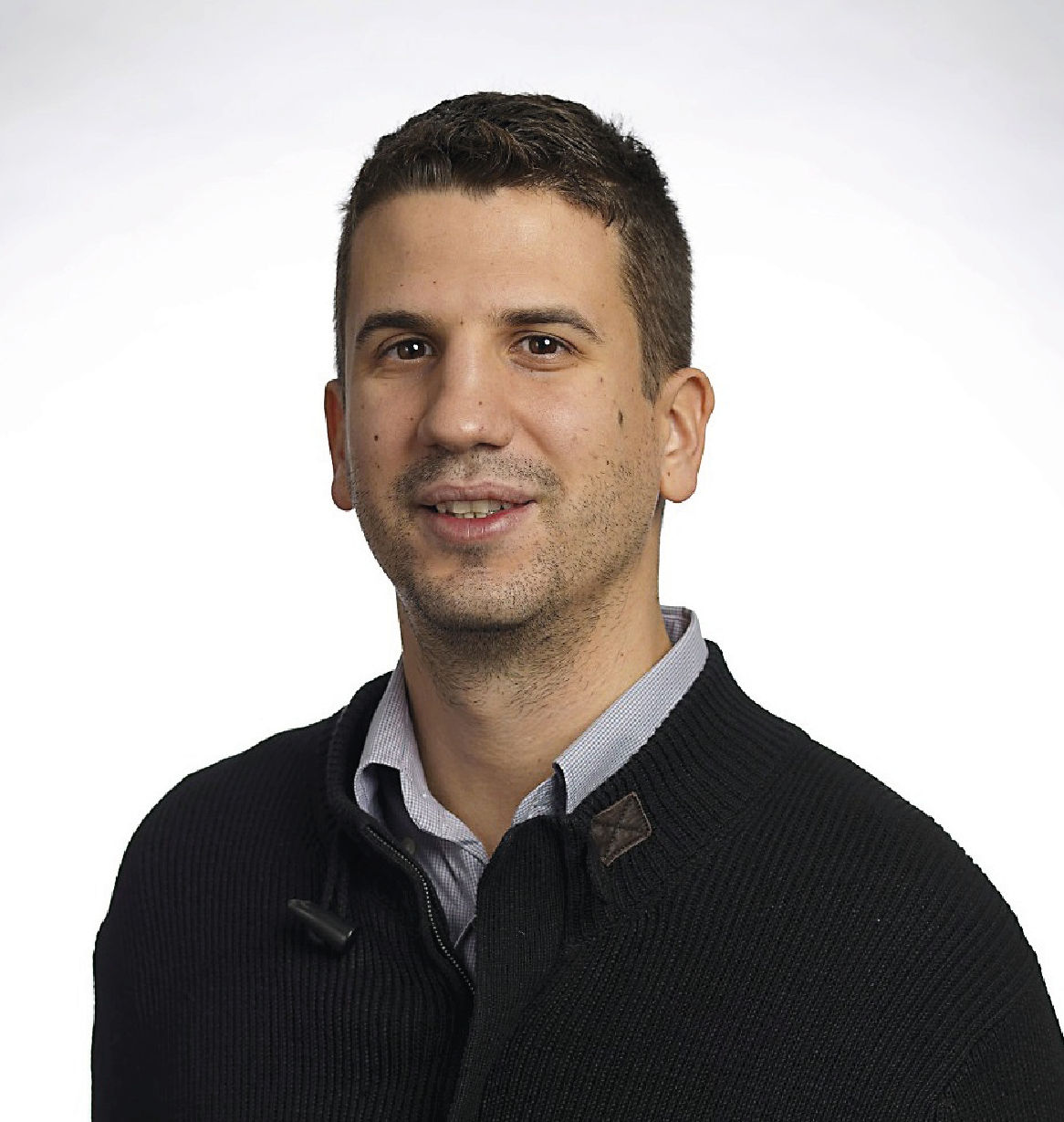}}]
{Michail Matthaiou}(Fellow, IEEE) was born in Thessaloniki, Greece in 1981. He obtained the Diploma degree (5 years) in Electrical and Computer Engineering from the Aristotle University of Thessaloniki, Greece in 2004. He then received the M.Sc. (with distinction) in Communication Systems and Signal Processing from the University of Bristol, U.K. and Ph.D. degrees from the University of Edinburgh, U.K. in 2005 and 2008, respectively. From September 2008 through May 2010, he was with the Institute for Circuit Theory and Signal Processing, Munich University of Technology (TUM), Germany working as a Postdoctoral Research Associate. He is currently a Professor of Communications Engineering and Signal Processing and Deputy Director of the Centre for Wireless Innovation (CWI) at Queen’s University Belfast, U.K. after holding an Assistant Professor position at Chalmers University of Technology, Sweden. His research interests span signal processing for wireless communications, beyond massive MIMO, intelligent reflecting surfaces, mm-wave/THz systems and deep learning for communications.

Dr. Matthaiou and his coauthors received the IEEE Communications Society (ComSoc) Leonard G. Abraham Prize in 2017. He currently holds the ERC Consolidator Grant BEATRICE (2021-2026) focused on the interface between information and electromagnetic theories. To date, he has received the prestigious 2023 Argo Network Innovation Award, the 2019 EURASIP Early Career Award and the 2018/2019 Royal Academy of Engineering/The Leverhulme Trust Senior Research Fellowship. His team was also the Grand Winner of the 2019 Mobile World Congress Challenge. He was the recipient of the 2011 IEEE ComSoc Best Young Researcher Award for the Europe, Middle East and Africa Region and a co-recipient of the 2006 IEEE Communications Chapter Project Prize for the best M.Sc. dissertation in the area of communications. He has co-authored papers that received best paper awards at the 2018 IEEE WCSP and 2014 IEEE ICC. In 2014, he received the Research Fund for International Young Scientists from the National Natural Science Foundation of China. He is currently the Editor-in-Chief of Elsevier Physical Communication, a Senior Editor for \textsc{IEEE Wireless Communications Letters} and \textsc{IEEE Signal Processing Magazine}, and an Area Editor for \textsc{IEEE Transactions on Communications}. He is an IEEE and AAIA Fellow.
\end{IEEEbiography}

\begin{IEEEbiography}[{\includegraphics[width=1in,height=1.25in,clip,keepaspectratio]{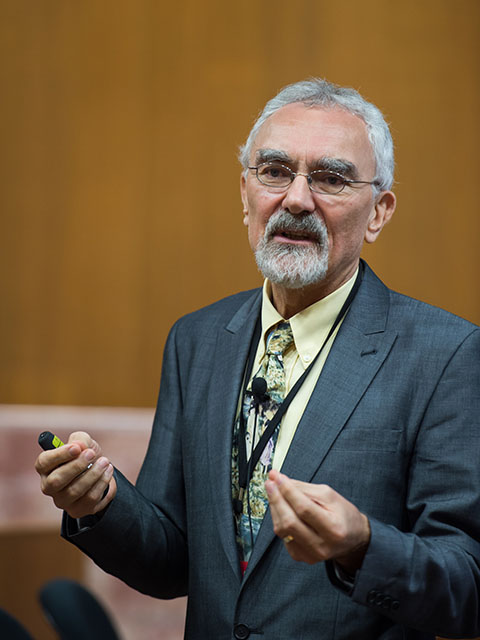}}]
{Lajos Hanzo} (FIEEE'04) received Honorary Doctorates  from the Technical University of Budapest (2009) and Edinburgh University (2015). He is a Foreign Member of the Hungarian Science-Academy, Fellow of the Royal Academy of Engineering (FREng), of the IET, of EURASIP and holds the IEEE Eric Sumner Technical Field Award. For further details please see \url{http://www-mobile.ecs.soton.ac.uk}, \url{https://en.wikipedia.org/wiki/Lajos_Hanzo}.

\end{IEEEbiography}

\end{document}